\documentclass{article}
\usepackage{amsmath}
\usepackage{amssymb}
\usepackage{amsfonts}
\usepackage{latexsym}
\usepackage{color}
\usepackage{graphicx}

\catcode `\@=11 \@addtoreset{equation}{section}

\catcode `\@=12

%\input{mssymb}

%Deskwriter
 %\voffset2cm
 %\Hoffset2cm

%Laserwriter
  \voffset1cm

\newcommand{\be}{\begin{equation}}
\newcommand{\en}{\end{equation}}
\newcommand{\bea}{\begin{eqnarray}}
\newcommand{\ena}{\end{eqnarray}}
\newcommand{\beano}{\begin{eqnarray*}}
\newcommand{\enano}{\end{eqnarray*}}
\newcommand{\bee}{\begin{enumerate}}
\newcommand{\ene}{\end{enumerate}}

\newcommand{\mult}{\,{\scriptstyle \square}\,}
\newcommand{\vp}{\varphi}
\newcommand{\mc}{\mathcal}
\newcommand{\mb}{\mathbb}
\newcommand{\R}{{\mc R}}
\newcommand{\Q}{{\mc Q}}
\newcommand{\N}{\mathbb N}

\newcommand{\Hil}{{\cal H}}

\newcommand{\HH}{\mc H}

\newcommand{\F}{{\cal F}}
\newcommand{\B}{{\cal B}}
\newcommand{\Lc}{{\cal L}}
\newcommand{\LL}{{\cal L}}

\newcommand{\D}{{\cal D}}

\newcommand{\E}{{\cal E}}

\newcommand{\M}{{\cal M}}
\newcommand{\A}{\mathfrak A}
\newcommand{\Ao}{\A_0}
\newcommand{\MM}{{\mathfrak M}}
\newcommand{\DD}{{\mathfrak D}}
\def\P{{\mathcal P}}
\newcommand{\up}{\upharpoonright}

\newtheorem{thm}{Theorem}
\newtheorem{cor}[thm]{Corollary}

\newtheorem{prop}[thm]{Proposition}
\newtheorem{defn}[thm]{Definition}

\newtheorem{exa}[thm]{Example}
\newtheorem{remark}[thm]{Remark}
\newcommand{\beex}{\begin{exa}$\!\!${\bf }$\;$\rm }
\newcommand{\enex}{ \end{exa}}
\newenvironment{proof}{\noindent {\bf Proof:}}{\hfill$\Box$}
\newcommand{\rep}{\textsf{Rep}(\A,\Ao)}
\newcommand{\ip}[2]{\langle {#1}|{#2}\rangle}
\newcommand{\bedefi}{\begin{defn}$\!\!${\bf }$\;$\rm }
\newcommand{\findefi}{\end{defn}}
\newcommand{\berem}{\begin{remark}$\!\!\!${\bf} \rm }
\newcommand{\enrem}{ \end{remark}}
\newcommand{\w}{{\rm w}}
\def\b{{\textsf{b}}}

\begin{document}

\thispagestyle{empty}

\vspace*{1cm}

\begin{center}
{\Large \bf Representable states \\ on quasi-local quasi *-algebras}   \vspace{2cm}\\

{\large F. Bagarello}\\
  Dipartimento di Metodi e Modelli Matematici,
Fac. Ingegneria, Universit\`a di Palermo, I-90128  Palermo, Italy\\
e-mail: bagarell@unipa.it

\vspace{5mm}{\large C. Trapani}\\ Dipartimento di Matematica e Informatica, Universit\`a di Palermo,\\ I-90123 Palermo
(Italy)\\e-mail:
trapani@unipa.it

\vspace{5mm}
{\large S. Triolo}\\ Dipartimento di Matematica e Informatica, Universit\`a di Palermo,\\ I-90123 Palermo (Italy) \\e-mail:
salvo@math.unipa.it
\end{center}

\vspace*{1.5cm}

\begin{abstract}
\noindent
Continuing  a previous analysis originally motivated by physics, we consider representable states on quasi-local quasi *-algebras, starting with examining the possibility for a {\em compatible} family of {\em local} states  to give rise to a {\em global} state. Some properties of {\em local modifications} of representable states and some aspects of their asymptotic behavior are also considered.
\end{abstract}

\vspace{1cm}

{\bf Keywords}:  quasi *-algebras; states; representations

\vfill

\newpage

% Section 1
\section{Introduction and Preliminaries}\label{sec_zero}
In a previous paper \cite{bit0} A. Inoue and two of us introduced a local structure for a locally convex quasi *-algebra obtained as the completion of a C*-normed algebra with respect to a weaker locally convex topology. The resulting object was called a {\em quasi-local quasi *-algebra}. The main motivation for this comes from the so-called algebraic approach to quantum systems with infinitely many degrees of freedom, once one recognizes that the original formulation of Haag and Kastler (see, e.g. \cite{sew}) in terms of C*-algebras is not well adapted to the mathematical description of many models (see \cite{fbrev, ctrev} for reviews on this subject and the references therein). The local structure is one of the basic points of the approach since it allows to study properties of the global system through the corresponding properties of its local parts. But often limits of local observables may fail to exist in the C*-norm of the local algebra and completions under weaker topologies (that, in general, fail to be *-algebras) must be taken. This puts on the stage {\em partial *-algebras} \cite{ait_book} and, more precisely, quasi *-algebras, originally introduced by Lassner for the study of the thermodynamical limits of certain spin systems \cite{lass1,lass2}.

In this paper we continue this study, focusing our attention on states (or, more generally, positive linear functionals) which allow a GNS-like representation and named, for this reason {\em representable}.
The paper is organized as follows. In Section \ref{sect_one}  we discuss first conditions for recovering from a {\em compatible} family of states on the the C*-algebras of the local net a {\em global} state. Then we show that the usual characterization of the purity of a state in terms of irriducibility of the corresponding GNS representation, well known in the case of C*-algebras, extends to  quasi-local quasi *-algebras.  In Section \ref{sect_two} we consider once more {\em local modifications} of representable states \cite{bit0} and give some more results on their asymptotic behavior.

\medskip To begin with, we recall briefly some definitions concerning quasi
*-algebras. More details can be found in \cite{ait_book}.

Let $\A$ be a linear space and $\Ao$ a  $^\ast$ -algebra contained
in $\A$. We say that $\A$ is a quasi  $^\ast$ -algebra with
distinguished  $^\ast$ -algebra $\Ao$ (or, simply, over $\Ao$) if
\begin{itemize}\item[(i)] the left multiplication $ax$ and the right multiplication $
xa$ of an element $a$ of $\A$ and an element $x$ of $\A_0$ which
 extend the multiplication of $\A_0$ are always defined and
bilinear; \item[(ii)] $x_1 (x_2 a)= (x_1x_2 )a$ and $x_1(a
 x_2)= (x_1 a) x_2$, for each $x_1, x_2 \in \A_0$ and $a \in \A$;

\item[(iii)] an involution $*$ which extends the involution of $\A_0$
is defined in $\A$ with the property $(ax)^*= x^*a^*$ and $(xa)^
* =a^* x^*$ for each $x \in \A_0$ and $a \in \A$.
\end{itemize}
A quasi  $^\ast$ -algebra $(\A,\Ao)$ is said to have a unit
$e$ if there exists an element $e \in \Ao$ such
that $a\,e =e\, a=a, \;\, \forall a\in \A$.

A quasi *-algebra $(\A, \Ao)$ is called locally convex if $\A$ is a locally convex space with topology $\tau$, with the properties:
(i) the involution $a \mapsto a^*$ is continuous in $\A$;
(ii) for every fixed $x \in \Ao$, the multiplications $a\mapsto ax$, $a\mapsto xa$ are continuous in $\A$;
(iii) $\Ao$ is dense in $\A$. A locally convex quasi*-algebra with topology $\tau$ is denoted by $(\A[\tau], \Ao)$.

Let $\D$ be a dense subspace of a Hilbert space $\Hil$. We denote by $ \Lc^\dagger(\D,\Hil) $ the set of all
(closable) linear operators $X$ such that $ {\D}(X) = {\D},\; {\D}(X^*) \supseteq {\D}.$

The set $ \LL^\dagger(\D,\Hil ) $ is a  partial *-algebra
 with respect to the following operations: the usual sum $X_1 + X_2 $,
the scalar multiplication $\lambda X$, the involution $ X \mapsto X^\dagger = X^* \up {\D}$ and the \emph{
(weak)} partial multiplication $X_1 \mult X_2 = {{X_1}^\dagger}^* X_2$, defined whenever $X_2$ is a weak right
multiplier of $X_1$ (we shall write $X_2 \in R^{\rm w}(X_1)$ or $X_1 \in L^{\rm w}(X_2)$), that is, iff $ X_2
{\D} \subset {\D}({{X_1}^\dagger}^*)$ and  $ X_1^* {\D} \subset {\D}(X_2^*).$

If $\MM\subseteq \LL^\dagger(\D,\Hil ) $ we denote by $\MM_\b$ its {\em bounded part}; i.e, $\MM_\b:=\{ X\in \MM;\, X \mbox{ is a bounded operator on }\D\}.$

Let $\Lc^\dagger(\D)$ be the subspace of $\Lc^\dagger(\D,\Hil)$ consisting of all its elements  which leave, together with their adjoints, the domain $\D$ invariant. Then $\Lc^\dagger(\D)$ is a *-algebra with
respect to the usual operations.\\
It is easy to see that $(\LL^\dagger(\D,\Hil ), \Lc^\dagger(\D)_\b)$ is a quasi *-algebra.

If $\mathcal O\subset\Lc^\dagger(\D,\Hil)$, with $\mathcal O=\mathcal O^\dagger$, then the weak bounded commutant $\mathcal O_\w'$ is defined by
$$
\mathcal O_\w'=\{X\in\Lc_\b^\dagger(\D,\Hil): \ip{A\xi}{X^\dagger \eta}= \ip{X\xi}{A^\dagger \eta}, \, \forall A\in \mathcal O, \,\xi,\eta\in\D\}.
$$

Let $(\A,\Ao)$ be a quasi *-algebra with identity $e$ and $\D_\pi$  a dense domain in a certain Hilbert
space $\Hil_\pi$. A linear map $\pi$ from $\A$ into $\LL^\dagger(\D_\pi, \Hil_\pi)$ such that:

(i) $\pi(a^*)=\pi(a)^\dagger, \quad \forall a\in \A$,

(ii) if $a\in \A$, $x\in \Ao$, then $\pi(a)${$\Box$}\!\! $\pi(x)$ is well defined and
$\pi(ax)=\pi(a)${$\Box$}\!\! $\pi(x)$,

\noindent
is called  a *-representation of $\A$. Moreover, if

(iii) $\pi(\Ao)\subset \LL^\dagger(\D_\pi)$,

\noindent then $\pi$ is said to be a *-representation of the quasi *-algebra $(\A,\Ao)$.

\medskip If $\pi$ is a *-representation of $(\A, \Ao)$, then the {\em closure} $\widetilde{\pi}$ of $\pi$ is defined, for each
$x \in \A$, as the restriction of $\overline{\pi(x)}$ to the domain $\widetilde{\D_\pi}$, which is the completion
of $\D_\pi$ under the {\em graph topology} $t_\pi$ \cite{ait_book} defined by the seminorms $\xi \in \D_\pi \to
\|\pi(a)\xi\|$, $a\in \A$. If $\pi=\widetilde{\pi}$ the representation is said to be {\em closed}.

The representation $\pi$ is said to be {\em ultra-cyclic} if there exists $\xi_0\in\D_\pi$ such that
$\D_\pi=\pi(\Ao)\xi_0$, while is said to be {\em cyclic} if there exists $\xi_0\in\D_\pi$ such that
$\pi(\Ao)\xi_0$ is dense in $\D_\pi$ w.r.t. $t_\pi$.

Let $(\A, \Ao)$ be a quasi-*algebra. We denote by $\rep$ the family of all *-representations of $(\A,\Ao)$.

\section{Quasi local structure and representations}\label{sect_one}

We consider now the  case where $\A$ has a {\em local structure}. Following \cite{brarob} we construct the local net of C*-algebras as
follows.

Let $\F$ be a set of indexes directed upward and with an orthonormality relation $\perp$ such that

\begin{itemize} \item[(i.)] $\forall
\alpha\in\F$ there exists $\beta\in\F$ such that $\alpha\perp\beta$; \item[(ii.)] if $\alpha\leq \beta$ and $\beta\perp
\gamma$,\, $\alpha, \beta, \gamma\in\F$, then $\alpha\perp\gamma$; \item[(iii.)] if, for $\alpha, \beta, \gamma\in\F$,\,
$\alpha\perp\beta$ and $\alpha\perp\gamma$, there exists $\delta\in\F$ such that $\alpha\perp\delta$ and
$\delta\geq \beta, \gamma$.
\end{itemize}

\bedefi \label{localstruct} Let $(\A[\tau],\Ao)$ be a locally convex quasi *-algebra with unit $e$. We say that $(\A, \Ao)$ has a {\em local structure} if there exists a net
$\{\A_\alpha(\|.\|_\alpha), \,\alpha\in\F\}$  of subspaces of $\Ao$, indexed by $\F$, such that every $\A_\alpha$ is a C*-algebra (with norm $\|.\|_\alpha$ and unit $e$) with the properties

\begin{itemize} \item[(a)] $\Ao=\bigcup_{\alpha \in \F} \A_\alpha$
\item[ (b)] if $\alpha\geq
\beta$ then $\A_\alpha\supset\A_\beta$; \item[(c)] if
$\alpha\perp\beta$, then $xy=yx$ for every $x\in\A_\alpha$, $y\in\A_\beta$.
 \item[(d)] if $x \in \A_\alpha \cap \A_\beta$, then $\|x\|_\alpha =\|x \|_\beta$.
\end{itemize}
A quasi *-algebra $(\A[\tau], \Ao)$ with a local structure will be shortly called a {\em quasi-local quasi*-algebra}.
\findefi

\berem There is no loss of generality in assuming that $\A[\tau]$ is a complete locally convex space, since taking the completion would maintain untouched the local structure. \enrem

If $(\A[\tau], \Ao)$ is a quasi-local quasi-*algebra, and $x \in \Ao$, then there exists $\beta\in\F$ such that $x\in\A_\beta$.  We put
$J_x= \{{\alpha \in \F: x \in \A_\alpha\}}$. Because of (d), $\|x\|:=\|x\|_\alpha$, $\alpha \in J_x$, is  a well-defined norm on $\Ao$.
Then $\Ao$ is a C*-normed algebra with norm $\| \cdot\|$.  Moreover, putting $\A_\infty:=\cap_{\alpha\in\F}\A_\alpha$, we will assume that $\forall\,
x\in\A_0$, $x\notin \A_\infty$, there exists $\alpha_x\in\F$ such that $\cap_{\beta\in
J_x}\A_\beta=\A_{\alpha_x}$.  We call $\alpha_x$ the {\em  support of $x$}, \cite{bit0}.

\berem First we notice that if $(\A, \Ao)$ is a quasi-local quasi *-algebra and $\pi \in \rep$, then $\pi_\alpha:=\pi\upharpoonright_{\A_\alpha}$ is bounded (being a *-representation of a C*-algebra). Moreover, as already stated, if $x\in \Ao$, then $x$ belongs to some $\A_\alpha$, $\alpha \in \F$. Then, we get
\begin{equation}\label{3.1}\|\pi(x)\|=\|\pi_\alpha (x)\|\leq \|x\|_\alpha =\|x\|,\quad \forall x \in \Ao.\end{equation}
Hence $\pi$ is bounded on $\Ao$, too.
\enrem

The following proposition, originally given in \cite{ct_ban}, extends the GNS construction to quasi
*-algebras.

\begin{prop}

Let $\omega$ be a linear functional on $\A$ satisfying the
following requirements:

(L1) $\omega(a^*a)\geq 0$ for all $a\in\Ao$;

(L2) $\omega(b^*x^*a)=\overline{\omega(a^*xb)}$,
$\forall\,a,b\in\Ao$, $x\in\A$;

(L3) $\forall x\in\A$ there exists $\gamma_x>0$ such that
$|\omega(x^*a)|\leq \gamma_x\,\omega(a^*a)^{1/2}$.

Then there exists a triple $(\pi_{\omega}, \lambda_{\omega},
\Hil_{\omega})$ such that

\begin{itemize}
  \item $\pi_{\omega}$ is a ultra-cyclic *-representation of $\A$ with ultra-cyclic vector $\xi_\omega$;
  \item $\lambda_{\omega}$ is a linear map of $\A$ into
  $\Hil_{\omega}$ with $\lambda_{\omega}(\Ao)=\D_{\pi_\omega}$, $\xi_\omega=\lambda_{\omega}(e)$ and
  $\pi_{\omega}(x)\lambda_{\omega}(a)=\lambda_{\omega}(xa)$, for every $x \in\A,\, a \in \Ao$.
  \item $\omega(x)=\ip{\pi_{\omega}(x)\xi_\omega}{\xi_\omega}$,
  for every $x \in \A$.
\end{itemize}

\end{prop}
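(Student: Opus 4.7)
The plan is a GNS-type construction in three stages: first carry out the standard GNS on $\Ao$, then extend $\lambda_\omega$ from $\Ao$ to $\A$ via the Riesz representation theorem (this is where (L3) is used), and finally define $\pi_\omega$ and verify its properties. For stage one, (L1) together with (L2) applied with $x = e$ (which gives $\omega(b^*a) = \overline{\omega(a^*b)}$) turns $(a,b) \mapsto \omega(b^*a)$ into a positive semidefinite hermitian form on $\Ao$; I quotient by the null space $\Nc := \{a \in \Ao : \omega(a^*a) = 0\}$ and complete to obtain $\Hil_\omega$, put $\lambda_\omega(a) = a + \Nc$ for $a \in \Ao$, set $\D_{\pi_\omega} := \lambda_\omega(\Ao)$, and choose $\xi_\omega := \lambda_\omega(e)$.

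For stage two, given $x \in \A$, condition (L3) says that the linear functional $a \mapsto \omega(x^*a)$ on $\Ao$ is bounded with respect to the seminorm $a \mapsto \omega(a^*a)^{1/2}$; it therefore descends to a bounded functional on $\D_{\pi_\omega}$ and extends continuously to $\Hil_\omega$. Riesz then yields a unique $\eta_x \in \Hil_\omega$ with $\omega(x^*a) = \ip{\lambda_\omega(a)}{\eta_x}$ for all $a \in \Ao$, and I set $\lambda_\omega(x) := \eta_x$; (L2) with $b = e$ lets this be rewritten as $\ip{\lambda_\omega(x)}{\lambda_\omega(a)} = \omega(a^*x)$, and this is consistent with the standard definition when $x \in \Ao$. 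For stage three, define $\pi_\omega(x)\lambda_\omega(a) := \lambda_\omega(xa)$ for $x \in \A$, $a \in \Ao$. Well-definedness: if $a \in \Nc$, then for every $b \in \Ao$, recognizing $b^*x = (x^*b)^*$ and applying (L3) to $x^*b \in \A$ gives $|\omega(b^*xa)| = |\omega((x^*b)^*a)| \leq \gamma_{x^*b}\,\omega(a^*a)^{1/2} = 0$, so $\ip{\lambda_\omega(xa)}{\lambda_\omega(b)} = 0$ on the dense subspace $\D_{\pi_\omega}$ and $\lambda_\omega(xa) = 0$. The adjoint property $\pi_\omega(a^*) = \pi_\omega(a)^\dagger$ follows by computing both sides on vectors $\lambda_\omega(b), \lambda_\omega(c)$ using the extended definition of $\lambda_\omega$ and associativity in $(\A,\Ao)$; the partial-multiplication axiom $\pi_\omega(ax) = \pi_\omega(a) \mult \pi_\omega(x)$ for $a \in \A$, $x \in \Ao$ is immediate from $\lambda_\omega((ax)c) = \lambda_\omega(a(xc))$. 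Finally $\pi_\omega(x)\xi_\omega = \lambda_\omega(x)$ delivers both ultra-cyclicity ($\pi_\omega(\Ao)\xi_\omega = \D_{\pi_\omega}$) and $\ip{\pi_\omega(x)\xi_\omega}{\xi_\omega} = \omega(x)$.

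The main obstacle is the passage from $\Ao$ to $\A$: the Riesz-type extension of $\lambda_\omega$ (which is what (L3) is designed to enable) and especially the well-definedness check for $\pi_\omega(x)$, since (L3) cannot be applied directly to the pair $(x,a)$ with $a \in \Nc$ but only after recognizing $b^*x = (x^*b)^*$ and invoking (L3) for the family of auxiliary elements $x^*b$ parametrized by $b \in \Ao$. Everything else mirrors the classical C*-algebra GNS argument and reduces to algebraic manipulations in $(\A,\Ao)$.
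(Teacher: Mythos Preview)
The paper does not actually prove this proposition; it is quoted from \cite{ct_ban} and stated without proof. Your argument is correct and is precisely the standard route for this result: build the pre-Hilbert space from $\Ao$ via (L1) and the hermiticity coming from (L2) with $x=e$, use (L3) and Riesz to extend $\lambda_\omega$ to all of $\A$, and then define $\pi_\omega(x)\lambda_\omega(a)=\lambda_\omega(xa)$. Your well-definedness check---rewriting $b^*(xa)=(x^*b)^*a$ and invoking (L3) for $x^*b$---is exactly the right idea, and the adjoint identity $\pi_\omega(x)^\dagger=\pi_\omega(x^*)$ is obtained from (L2) as you indicate (that computation also gives closability, i.e.\ $\pi_\omega(x)\in\Lc^\dagger(\D_{\pi_\omega},\Hil_\omega)$, which you use implicitly). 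The only cosmetic remark is that the verification of $\pi_\omega(ax)=\pi_\omega(a)\mult\pi_\omega(x)$ for $a\in\A$, $x\in\Ao$ also needs the observation $\pi_\omega(\Ao)\D_{\pi_\omega}\subseteq\D_{\pi_\omega}$ (so that the weak product is defined), which is immediate from $\Ao$ being a subalgebra; otherwise nothing is missing.
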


\bedefi\label{defn_representable}A linear functional satisfying (L1),(L2),(L3) is called \emph{representable}. We denote by $\R(\A)$ the set of representable linear functionals on $\A$.\findefi

It is easily seen that $\R(\A)$ is a cone (i.e. if $\omega_1, \omega_2 \in \R(\A)$, then $\omega_1+\omega_2 \in \R(\A)$ and $\lambda \omega_1 \in \R(\A)$, for $\lambda \geq 0$). Before going forth we give some easy examples.

\beex As we mentioned in Section \ref{sec_zero}, if $\D$ is a dense subspace of Hilbert space $\Hil$, $(\LL^\dagger(\D,\Hil ), \Lc^\dagger(\D)_\b)$ is a quasi *-algebra. For every $\xi \in \D$, the linear functional $\omega_\xi$ defined by
$$ \omega_\xi (X)= \ip{X\xi}{\xi}, \quad X \in \LL^\dagger(\D,\Hil )$$ is representable.
\enex
\beex Consider the quasi *-algebra $(L^p(I), L^\infty(I))$, where $L^p(I)$  ($p\geq 1$) and $L^\infty(I)$ are the usual Lebesgue spaces on $I:=[0,1]$.
Put $\omega(f)=\int_0^1 f(x)dx$, $f \in L^p(I)$. If $1\leq p <2$, $\omega$ is not representable. Indeed, if $f\in L^p(I)\setminus L^2 (I)$, there cannot exist any $\gamma_f>0$ such that
$$ |\omega(f\vp)|= \left|\int_0^1 f(x)\vp(x) dx\right|\leq \gamma_f\, \omega(\vp^*\vp)^{1/2}= \gamma_f\,\|\vp\|_2, \quad \forall \vp \in L^\infty(I),$$ since this would imply that $f \in L^2(I)$. \enex

Let now $(\A[\tau], \Ao)$ be a { quasi-local quasi *-algebra}. If $\omega$ is a linear functional on $\A$, we put $\omega_\alpha=\omega \upharpoonright_{\A_\alpha}$. If $\omega$ is representable, then $\omega_\alpha$ is a positive linear functional on $\A_\alpha$.
Let $\{\pi_\omega,\HH_\omega,\xi_\omega\} $ be the GNS construction corresponding to $\omega$. Then, $[\pi_{\omega}(\A_0)\xi_\omega]={\HH}_{\omega}$ , where $[\M]$ denotes the closure of the subspace $\M$ of $\HH$.  The restriction $\pi_{\omega}
\upharpoonright_{\A_\alpha} $ of $\pi_\omega$ to $\A_\alpha$ is a
*-representation of $\A_\alpha$ in $\HH_\alpha :=[\pi_{\omega}(\A_\alpha)\xi_\omega]$.
Then  $\pi_{\omega} \upharpoonright_{\A_\alpha}$ is unitarily equivalent to the GNS representation $\pi_{\omega_\alpha}$ of $\A_\alpha$ induced by $\omega_\alpha$.

% il vecchio remark 5 \`e in cosa all'articolo.

\medskip Let us now suppose that, for every $\alpha \in \F$, we are given a positive linear functional $\omega_\alpha$. A natural question then arises: Does there exist a representable linear functional $\omega$ on $\A$ such that $\omega_\alpha=\omega \upharpoonright_{\A_\alpha}$, for every $\alpha \in \F$? A minimal requirement for a positive answer is that the family $\{\omega_\alpha;\, \alpha \in \F\}$, satisfies a {\em compatibility} condition like the following one:
\begin{equation} \label{coherence} \omega_\alpha (x) = \omega_\beta (x), \quad \forall x \in \A_\alpha \cap \A_\beta. \end{equation}
In this case we can define, for $x \in \Ao$,
\begin{equation}\label{defn_omega}  \omega(x) =\omega_\alpha (x), \quad \alpha \in J_x.\end{equation}
It is clear that $\omega$ is positive on $\Ao$ and that $\omega(x)\leq \omega(e) \|x\|$, for every $x \in \Ao$.
Our next goal is now to extend $\omega$, at least, to some elements of $\A\setminus \Ao$. This is, in general, a quite difficult problem to handle. We refer to two recent papers by Bellomonte and two of us \cite{btt1, btt2} for a more complete discussion. An obvious answer to this question can certainly be given when $\omega$ is $\tau$-continuous. But this is a rather restrictive assumption in concrete situations.

A possible approach \cite{btt2} consists in assuming that the sesquilinear form $\Omega$ associated to $\omega$
$$ \Omega(x,y) = \omega(y^* x), \quad x,y \in \Ao$$
is {\em closable}. This means that
\begin{itemize}
\item[(\textsf{cl})] if $x_n\stackrel{\tau}{\to} 0$, $x_n\in \Ao$ and $\Omega(x_n-x_m, x_n-x_m)\to 0$, then $\Omega(x_n, x_n)\to 0$.
\end{itemize}
In this case $\Omega$ has a {\em closure} $\overline{\Omega}$ defined on $\D(\overline{\Omega})\times \D(\overline{\Omega})$, where
$$\D(\overline{\Omega})=\{x \in \A; \exists x_n \stackrel{\tau}{\to} x, \, x_n\in \Ao,\, \Omega(x_n-x_m, x_n-x_m)\to 0\}$$ by
$$ \overline{\Omega}(x,y)= \lim_{n \to \infty} \Omega(x_n, y_n), \quad x,y \in \D(\overline{\Omega}).$$
The set $\D(\overline{\Omega})$ is, in general, only a vector subspace of $\A$, containing $\Ao$.

\smallskip An extension $\overline{\omega}$ of $\omega$ is now easy to define:
$$ \overline{\omega}(x) = \overline{\Omega}(x, e), \quad x \in \D(\overline{\Omega}).$$

\smallskip Now assume that $\Omega$ enjoys, for every sequence $\{x_n\}\subset \Ao$, the following property
\begin{itemize}
\item[(\textsf{wt})] $ \lim_{n \to \infty} \Omega(x_n,x_n)= 0 \; \Leftrightarrow \; \lim_{n \to \infty} \Omega(x_n^*,x_n^*)=0$.
\end{itemize}
In this case, if $x \in  \D(\overline{\Omega})$, then also $x^* \in \D(\overline{\Omega})$.
Moreover, taking into account that if $a,x \in \Ao$ there exists $\alpha \in \F$ such that $a,x \in \A_\alpha$, from a well-known property of positive linear functionals on C*-algebras, we have
\begin{eqnarray*}
\Omega(ax,ax) &=& \omega(x^*a^*ax)= \omega_\alpha(x^*a^*ax)\\ &\leq& \|a\|^2\omega_\alpha(x^*x)=\|a\|^2\Omega(x,x).
\end{eqnarray*}
If $x \in \D(\overline{\Omega})$ the previous inequality extends by a limit argument.
This inequality easily implies that if $a \in \Ao$ and $x \in \D(\overline{\Omega})$, then $ax \in \D(\overline{\Omega})$. Since, $\D(\overline{\Omega})$ is stable under involution we also get $xa \in \D(\overline{\Omega})$. Hence $(\D(\overline{\Omega}), \Ao)$ is a quasi *-algebra, with $\D(\overline{\Omega})$ dense in $\A$.
It is now easy to check, by simple limit arguments, that $\overline\omega$ satisfies (L1), (L2) and (L3) when considered in the quasi *-algebra $(\D(\overline{\Omega}), \Ao)$.

\bedefi Let $(\A[\tau], \Ao)$ be a quasi-local quasi-*algebra.  Let $\DD(\omega)$ be a subspace of $\A$ containing $\Ao$ and $\omega$ a linear functional defined on $\DD(\omega)$. We say that $\omega$ is \emph{quasi-representable} if $(\DD(\omega), \Ao)$ is a quasi *-subalgebra of $(\A,\Ao)$ and $\omega$ is representable on $(\DD(\omega), \Ao)$. If $\DD(\omega)=\A$ we say, simply, that $\omega$ is \emph{representable} \underline{on} $\A$.
We denote, respectively,  by $\Q(\A)$ and $\R(\A)$ the set of quasi-representable and representable linear functionals on $\A$.\findefi

The previous discussion can be summarized by the following
\begin{prop} Let $(\A[\tau], \Ao)$ be a quasi-local quasi-*algebra with net of C*-algebras $\{\A_\alpha;\, \alpha \in \F\}$. Let, for every $\alpha \in \F$, $\omega_\alpha$ be a positive linear functional on $\A_\alpha$ and assume that the family $\{\omega_\alpha; \, \alpha \in \F\}$ satisfies \eqref{coherence}. Define $\omega$ on $\Ao$ by \eqref{defn_omega}. Then, if the corresponding positive sesquilinear form $\Omega$ satisfies the conditions {\em (\textsf{cl})} and {\em (\textsf{wt})}, $\omega$ has a quasi-representable extension.

\end{prop}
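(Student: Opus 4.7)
The strategy is essentially to formalize the construction sketched in the paragraphs immediately preceding the statement, so the proof proposal reads as a verification that each step in that construction goes through under hypotheses (\textsf{cl}) and (\textsf{wt}).

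First I would check that $\omega$ in \eqref{defn_omega} is well-defined: if $\alpha,\beta\in J_x$ then $x\in \A_\alpha\cap\A_\beta$, and the compatibility condition \eqref{coherence} gives $\omega_\alpha(x)=\omega_\beta(x)$. Positivity of $\omega$ on $\Ao$ follows because every $a\in\Ao$ lies in some $\A_\alpha$ together with $a^*a$, so $\omega(a^*a)=\omega_\alpha(a^*a)\geq 0$. Hence the sesquilinear form $\Omega(x,y)=\omega(y^*x)$ is positive on $\Ao\times\Ao$, and by the Cauchy--Schwarz inequality satisfies $|\Omega(a,b)|\leq \Omega(a,a)^{1/2}\Omega(b,b)^{1/2}$.

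Next, by hypothesis (\textsf{cl}), $\Omega$ admits a closure $\overline\Omega$ on the domain $\D(\overline\Omega)\supseteq\Ao$ exactly as described in the text before the statement. I would then set $\overline\omega(x):=\overline\Omega(x,e)$. The core of the proof is to verify that $(\D(\overline\Omega),\Ao)$ is a quasi *-subalgebra of $(\A,\Ao)$. Stability under involution is obtained using (\textsf{wt}): if $x_n\stackrel{\tau}{\to} x$ with $\{x_n\}\subset\Ao$ and $\Omega(x_n-x_m,x_n-x_m)\to 0$, then by (\textsf{wt}) also $\Omega(x_n^*-x_m^*,x_n^*-x_m^*)\to 0$, so $x^*\in\D(\overline\Omega)$ and continuity of the involution in $\A[\tau]$ gives $x_n^*\stackrel{\tau}{\to}x^*$. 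Stability under multiplication by $a\in\Ao$ follows from the C*-inequality
\begin{equation*}
\Omega(ax,ax)=\omega_\alpha(x^*a^*ax)\leq\|a\|^2\,\omega_\alpha(x^*x)=\|a\|^2\,\Omega(x,x),\quad x,a\in\A_\alpha,
\end{equation*}
applied to a defining sequence $\{x_n\}$ for $x\in\D(\overline\Omega)$ (using $\|\cdot\|$ on $\Ao$ from remark~\ref{3.1}) and passing to the limit; combined with stability under involution this yields $ax,xa\in\D(\overline\Omega)$.

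Finally I would verify (L1)--(L3) for $\overline\omega$ on $(\D(\overline\Omega),\Ao)$. (L1) is immediate from $\overline\omega(a^*a)=\overline\Omega(a,a)\geq 0$ for $a\in\Ao$. For (L2), given $a,b\in\Ao$ and $x\in\D(\overline\Omega)$ approximated by $\{x_n\}\subset\Ao$, the identity $\omega(b^*x_n^*a)=\overline{\omega(a^*x_n b)}$ (holding in $\Ao$, since both sides reduce to evaluations of some $\omega_\alpha$) passes to the limit using continuity of left/right multiplication by $a,b$ and of the involution. For (L3), the Cauchy--Schwarz inequality for $\overline\Omega$ gives
\begin{equation*}
|\overline\omega(x^*a)|=|\overline\Omega(a,x)|\leq \overline\Omega(a,a)^{1/2}\overline\Omega(x,x)^{1/2}=\overline\omega(a^*a)^{1/2}\,\gamma_x,
\end{equation*}
with $\gamma_x:=\overline\Omega(x,x)^{1/2}$. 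The step that needs the most care is the passage to the limit in (L2), because it involves sequences in three different positions; this is where the topological hypotheses on $(\A[\tau],\Ao)$ (continuity of involution and of multiplications by elements of $\Ao$) together with $\Ao$-stability of $\D(\overline\Omega)$ just established are used essentially. Once (L1)--(L3) are in force, Proposition above applied to $(\D(\overline\Omega),\Ao)$ yields the GNS triple for $\overline\omega$, and by definition $\overline\omega$ is a quasi-representable extension of $\omega$.
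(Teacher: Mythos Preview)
Your proposal is correct and follows essentially the same route as the paper: the proposition is stated there as a summary of the preceding discussion, and you have faithfully formalized that discussion---well-definedness of $\omega$ via \eqref{coherence}, construction of $\overline\Omega$ from (\textsf{cl}), involution-stability of $\D(\overline\Omega)$ from (\textsf{wt}), $\Ao$-module stability from the C*-inequality, and the limit verification of (L1)--(L3). Your treatment of (L3) via Cauchy--Schwarz for $\overline\Omega$ is a clean way to phrase what the paper calls ``simple limit arguments''.
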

\berem It is clear that if $\omega$, defined on $\DD(\omega)$, is quasi-representable, then it is representable as a linear functional on $\DD(\omega)$. Thus, also in this case, the restriction of $\omega$ to every $\A_\alpha$ is a positive linear functional on $\A_\alpha$.
\enrem

The notion of representability of a linear functional on the quasi local quasi *-algebra $(\A[\tau], \Ao)$ is independent of the topology $\tau$ given on $\A$. The latter has not been specified so far, leaving us some freedom for a reasonable choice of $\tau$, in order to make the interplay between representations and topology as much regular as possible, as it happens for C*-algebras. For this we need some additional assumption.

\bigskip
Let $(\A,\Ao)$ be a quasi *-algebra such that $\rep$ is {\em faithful} in the following sense:
$$\forall x \in \A,\, x\neq 0 \; \mbox{there exists } \pi \in \rep \mbox{ such that } \pi(x)\neq 0.$$
Then we can endow $\A$ with the topology $\tau_{rep}$ which is the weakest locally convex topology on $\A$ such that every $\pi\in \rep$ is continuous from $\A$ into $\Lc^\dagger(\D_\pi,\Hil_\pi)[\tau_{s^*}]$, where $\tau_{s^*}$ stands for the {\em strong* topology} of $\Lc^\dagger(\D_\pi,\Hil_\pi)$.
The topology $\tau_{rep}$ of $\A$ is then defined by the set of seminorms
$$ a\in \A \to \|\pi(a)\xi\|+ \|\pi(a^*)\xi\|, \quad \pi \in \rep,\, \xi \in \D_\pi.$$
By (\ref{3.1}), the topology induced by $\tau_{rep}$ on $\Ao$ is weaker than the norm $\|\cdot\|$.

\berem The fact that $\rep$ is faithful, easily implies that $(\A,\Ao)$ has a {\em sufficient} family of linear functionals satisfying (L1)-(L3).
Indeed, if $\pi\in \rep$ and $\xi \in \D_\pi$, then the linear functional $\omega^\pi_\xi$ defined by
$$ \omega^\pi_\xi (x)=\ip{\pi(x)\xi}{\xi} $$
satisfies (L1)-(L3). Were $\omega^\pi_\xi (x)=0$ for every $\pi\in \rep$ and $\xi \in \D_\pi$, then we would have $\pi(x)=0$, for every $\pi\in \rep$, which, in turn, implies $x=0$.
\enrem

From now on we will suppose that $\Ao$ is {\em dense} in
$\A[\tau_{rep}]$ (were it not so, we could replace $\A$ with the
closure of $\Ao$).

\begin{prop} The following statements hold.
\begin{itemize}\item[(i)] $(\A[\tau_{rep}],\Ao)$ is a locally convex quasi *-algebra.
\item[(ii)] Every $\omega \in \R(\A)$ is $\tau_{rep}$-continuous.
\end{itemize}
\end{prop}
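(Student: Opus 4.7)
\medskip\noindent\textbf{Plan of proof.}
For (i), I only need to check the three defining conditions of a locally convex quasi *-algebra, since density of $\Ao$ in $\A[\tau_{rep}]$ has been assumed just before the statement. The topology $\tau_{rep}$ is given by the seminorms
\[
p_{\pi,\xi}(a)=\|\pi(a)\xi\|+\|\pi(a^*)\xi\|,\qquad \pi\in\rep,\ \xi\in\D_\pi.
\]
The key observation is that each $p_{\pi,\xi}$ is by construction symmetric under involution, i.e.\ $p_{\pi,\xi}(a^*)=p_{\pi,\xi}(a)$, so $a\mapsto a^*$ is $\tau_{rep}$-isometric, hence continuous.

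For the separate continuity of multiplication by a fixed $x\in\Ao$, I exploit that $\pi(x)\in\Lc^\dagger(\D_\pi)_\b$, so in particular $\pi(x)$ and $\pi(x^*)$ are bounded and leave $\D_\pi$ invariant, and the partial product reduces to the ordinary composition on $\D_\pi$: $\pi(ax)\xi=\pi(a)\pi(x)\xi$ and $\pi((ax)^*)\xi=\pi(x^*)\pi(a^*)\xi$ for every $\xi\in\D_\pi$. Setting $\eta=\pi(x)\xi\in\D_\pi$, this yields the bound
\[
p_{\pi,\xi}(ax)\leq p_{\pi,\eta}(a)+\|\pi(x^*)\|\,p_{\pi,\xi}(a),
\]
and a symmetric inequality for $a\mapsto xa$ (using $\pi(xa)\xi=\pi(x)\pi(a)\xi$, noting that $\pi(a)\xi$ need not lie in $\D_\pi$ but $\pi(x)$ is bounded on $\Hil_\pi$, so the estimate goes through by $\|\pi(x)\pi(a)\xi\|\leq \|\pi(x)\|\,\|\pi(a)\xi\|$). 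Both bounds are continuous seminorms in $\tau_{rep}$, establishing separate continuity of the two multiplications.

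For (ii), given $\omega\in\R(\A)$ I invoke the GNS-type construction recalled just above: there exist $(\pi_\omega,\D_{\pi_\omega},\Hil_\omega)$ and an ultra-cyclic vector $\xi_\omega\in\D_{\pi_\omega}$ such that $\omega(x)=\ip{\pi_\omega(x)\xi_\omega}{\xi_\omega}$ for all $x\in\A$. Since $\pi_\omega\in\rep$, the Schwarz inequality in $\Hil_\omega$ gives
\[
|\omega(x)|\leq \|\xi_\omega\|\,\|\pi_\omega(x)\xi_\omega\|\leq \|\xi_\omega\|\,p_{\pi_\omega,\xi_\omega}(x),
\]
which is a continuous seminorm of $\tau_{rep}$; hence $\omega$ is $\tau_{rep}$-continuous.

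The only mildly delicate point is the multiplication step in (i): one must check that the partial product in $\Lc^\dagger(\D_\pi,\Hil_\pi)$ really collapses to the ordinary operator product on $\D_\pi$ when one factor comes from $\pi(\Ao)\subset\Lc^\dagger(\D_\pi)_\b$, which is what makes the bounded factor $\pi(x)$ or $\pi(x^*)$ available to pull out of the norm. Once this is observed, both parts of the proposition follow by direct, short estimates with the defining seminorms.
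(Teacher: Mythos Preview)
Your proof is correct and follows essentially the same approach as the paper's. For (i), both you and the paper use that $\pi(x)\xi\in\D_\pi$ for one half of the estimate and the boundedness of $\pi(x^*)$ (valid in the quasi-local setting by the earlier remark, equation~(\ref{3.1})) for the other half; you phrase this via explicit seminorm inequalities while the paper argues with nets, which is only a cosmetic difference. For (ii), both proofs use the GNS triple and the Cauchy--Schwarz inequality to bound $|\omega(x)|$ by $\|\xi_\omega\|\,p_{\pi_\omega,\xi_\omega}(x)$.
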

\begin{proof} (i): We need to prove that multiplications are separately continuous. Let $a_\lambda \stackrel{\tau_{rep}}{\to} a$ and $x \in \Ao$. Then for every $\pi \in \rep$ and $\xi \in \D_\pi$ we have
$$ \|\pi (a_\lambda-a)\xi\|\to 0 \quad \mbox{and}\quad \|\pi (a_\lambda^*-a^*)\xi\|\to 0.$$
Since $\pi(x)\xi \in \D_\pi$ we also get $\|\pi (a_\lambda-a)\pi(x)\xi\|\to
0$ and from the boundedness of $\pi(x)^*$ we also have
$$ \|\pi(x^*)\pi (a_\lambda^*-a^*)\xi\|\leq \|x\|\|\pi (a_\lambda^*-a^*)\xi\|\to 0.$$ Hence,
$$ \|\pi (a_\lambda-a)\pi(x)\xi\|+\|\pi(x^*)\pi (a_\lambda^*-a^*)\xi\| \to 0.$$
The proof of the continuity of the left multiplication is similar.

(ii): Let $\pi_\omega$ be the GNS representation of $\omega$ and $\xi_\omega$ the corresponding cyclic vector. Then $\pi_\omega$ is continuous. Hence, we have
\begin{eqnarray*}
|\omega(x)|&=& |\ip{\pi_\omega(x)\xi_\omega}{ \xi_\omega}| \\
&\leq & \|\pi_\omega (x)\xi_\omega\|\|\xi_\omega\| \\
&\leq & ( \|\pi_\omega (x)\xi_\omega\|+ \|\pi_\omega (x^*)\xi_\omega \parallel )\|\xi_\omega\|\\
&=& \omega(e) ( \|\pi_\omega (x)\xi_\omega\|+ \|\pi_\omega
(x^*)\xi_\omega\parallel),
\end{eqnarray*} which proves the statement.
\end{proof}

%\begin{prop} $(\A, \Ao)$ is a locally convex quasi C*-normed algebra.
%\end{prop}
%\begin{proof} It is easily seen that $\tau_{rep}$ and $\|\cdot\|$ are compatible and that $(\A, \Ao)$ satisfies the conditions (T1)-(T3) of \cite{bfit_cstarnormed}. (CHECK)
%\end{proof}

%\berem If $\pi \in\rep$, then for every $\xi \in \D_\pi$ we can define a se is faithful.  \enrem

%From now on $(\A,\Ao)$ will always denote a  quasi
%*-algebra where $\A$ is endowed with $\tau_{rep}$ and $\Ao$ is
%dense in $\A$.

\begin{prop}\label{nove} Let $(\A[\tau_{rep}], \Ao)$ be a quasi-local quasi *-algebra and $\pi \in \rep$. Then $\pi(\A)'_\w = \pi(\Ao)'_\w$ and $\pi(\A)'_\w$ is a von Neumann algebra.\label{trap}
\end{prop}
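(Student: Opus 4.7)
The plan is to reduce the proposition to the classical fact that the commutant of a $^*$-invariant subset of $B(\Hil_\pi)$ is a von Neumann algebra, by first collapsing the weak commutant of $\pi(\A)$ onto that of $\pi(\Ao)$ via a density argument.

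\textbf{Step 1 (the equality $\pi(\A)'_\w = \pi(\Ao)'_\w$).} The inclusion $\pi(\A)'_\w \subseteq \pi(\Ao)'_\w$ is trivial since $\Ao \subseteq \A$. For the converse I would pick $X \in \pi(\Ao)'_\w$, $a \in \A$, and $\xi,\eta \in \D_\pi$, and use the standing assumption that $\Ao$ is dense in $\A[\tau_{rep}]$ to approximate $a$ by a net $(x_\lambda) \subset \Ao$ in $\tau_{rep}$. The very definition of $\tau_{rep}$ through the seminorms $a \mapsto \|\pi(a)\xi\| + \|\pi(a^*)\xi\|$ guarantees the strong$^*$ convergences $\pi(x_\lambda)\xi \to \pi(a)\xi$ and $\pi(x_\lambda^*)\eta \to \pi(a^*)\eta$ in $\Hil_\pi$. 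Passing to the limit in the identity $\ip{\pi(x_\lambda)\xi}{X^\dagger\eta} = \ip{X\xi}{\pi(x_\lambda^*)\eta}$, valid for each $\lambda$, yields $\ip{\pi(a)\xi}{X^\dagger\eta} = \ip{X\xi}{\pi(a^*)\eta}$, so $X \in \pi(\A)'_\w$.

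\textbf{Step 2 ($\pi(\Ao)'_\w$ is a von Neumann algebra).} By (\ref{3.1}), every $\pi(x)$ with $x \in \Ao$ satisfies $\|\pi(x)\| \leq \|x\|$, so it extends to a bounded operator $\overline{\pi(x)} \in B(\Hil_\pi)$, and $x \mapsto \overline{\pi(x)}$ is a $^*$-homomorphism from the C*-normed algebra $\Ao$ into $B(\Hil_\pi)$; hence $\MM := \overline{\pi(\Ao)}$ is a $^*$-subalgebra of $B(\Hil_\pi)$. I would then show that, under the identification of $X \in \Lc_\b^\dagger(\D_\pi,\Hil_\pi)$ with its bounded extension $\bar X \in B(\Hil_\pi)$, the weak commutant $\pi(\Ao)'_\w$ coincides with the ordinary commutant $\MM'$ in $B(\Hil_\pi)$. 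The point is that, exploiting $\pi(x)^\dagger = \pi(x^*)$ together with the boundedness of $\bar X$ and $\overline{\pi(x)}$, the defining identity $\ip{\pi(x)\xi}{X^\dagger\eta} = \ip{X\xi}{\pi(x^*)\eta}$ rewrites as $\bar X\,\overline{\pi(x)}\xi = \overline{\pi(x)}\,\bar X\xi$ on the dense subspace $\D_\pi$, and hence on all of $\Hil_\pi$. Once this identification is secured, the conclusion follows from the classical fact that the commutant of a self-adjoint subset of $B(\Hil_\pi)$ is automatically a WOT-closed unital $^*$-subalgebra of $B(\Hil_\pi)$.

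\textbf{Main obstacle.} The only non-routine point is the bookkeeping for the identification $\pi(\Ao)'_\w \simeq \MM'$: one must check carefully that $X^\dagger = X^*{\restriction}\D_\pi$ extends consistently to $\bar X^*$ under the continuous extension, and that $\pi(x)^\dagger = \pi(x^*)$ dovetails with passing to closures on $\Hil_\pi$ so that the weak sesquilinear identity really upgrades to a global commutation on $\Hil_\pi$. Everything else, including the density-plus-continuity argument of Step~1, is essentially forced by the hypotheses and definitions already available in the excerpt.
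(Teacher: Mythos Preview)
Your proof is correct and follows essentially the same route as the paper: the density argument for the equality $\pi(\A)'_\w = \pi(\Ao)'_\w$ is identical, and your Step~2 is just a more explicit rendering of the paper's one-line observation that $\pi(\Ao)'_\w = \pi(\Ao)'$ because $\pi(\Ao)$ consists of bounded operators. The ``main obstacle'' you flag is exactly the detail the paper glosses over, so if anything your write-up is more careful.
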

\begin{proof}
Of course, $\pi(\Ao)\subseteq\pi(\A)$. Then, $\pi(\A)'_\w \subseteq
\pi(\Ao)'_\w.$ If $B\in\pi(\Ao)'_\w$ we have
$$\ip{B\pi(x_o)\xi}{\eta}=\ip{B\xi}{\pi(x{_o}^*)\eta}, \quad \forall x_{o}\in {\Ao},\, \xi, \eta\in \D_\pi.$$
For every $x\in\A$, there exist a net $x_\lambda\in\Ao$ such
that $x_\lambda \stackrel{\tau_{rep}}{\to} x$. Therefore
$$\parallel \pi(x_\lambda-x)\xi \parallel +\parallel  \pi(x_\lambda^*-x^*)\xi \parallel \rightarrow 0, \,\quad \xi\in\D_\pi$$
then
$$|\ip{B(\pi(x_\lambda)-\pi(x))\xi}{\eta}| \leq
\parallel(\pi(x_\lambda)-\pi(x))\xi \parallel \parallel B^*\eta \parallel \rightarrow 0$$
Similarly, we can prove that $|\ip{B\xi}{\pi(x_\lambda^*-x^*)\eta}|\rightarrow
0 $. Hence, $$ \ip{B\pi(x)\xi}{\eta}=\lim_\lambda
\ip{B\pi(x_\lambda )\xi}{\eta}=\lim_\lambda
\ip{B\xi}{\pi(x_\lambda^*)\eta}=\ip{B\xi}{\pi(x^*)\eta};
$$ i.e., $B\in\pi(\A)'_\w.$

Finally, $\pi(\A)'_\w$ is a von Neumann algebra, since $\pi(\Ao)'_\w=\pi(\Ao)'$, being $\pi(\A_0)$ a *-algebra of bounded operators.
\end{proof}

\medskip

It is well known that for states on a C*-algebra there is equivalence between extremality, purity and irreducibility of the corresponding GNS representations. We will now consider the relationship between the analogous statements in the case of a quasi-local quasi *-algebra. This analysis will involve an ordering of representable linear functionals based on positive elements of $\A$ and a convenient extension of the notion of irreducibility of a *-representation. Both notions will be defined in the sequel. We first fix the  following notations for subsets of $\R(\A)$ that will play a role in our discussion.
\begin{itemize}
%\item $\R(\A)$ is vector space;
\item $\B({\A}):=\{\omega \in \R(\A), \,\omega(e)\leq 1 \}$ is a convex subset of $\R(A)$;
 \item $\E({\A}):=\{\omega\in \B({\A}): \, \omega(e)=1 \}$ is a convex subset of $\B(A)$.
\end{itemize}
Elements of $\E({\A})$ will be called {\em states} of $(\A[\tau_{rep}],\Ao)$. As for C*-algebras, if $\omega$ is an extremal point of $\B({\A})$, then $\omega \in \E({\A}).$

\bedefi A *-representation $\pi$ of $(\A,\Ao)$ is said to be {\em quasi-irreducible}
if the set
 $\pi(\A)'_\w$ consists of multiples of the identity operator.
\findefi

\berem When dealing with a representation $\pi$ of a C*-algebra ${\mathfrak B}$, {\em irreducibility} means, loosely speaking, that $\pi$ cannot be decomposed into the direct sum of two nontrivial sub *-representations or, equivalently, that $\pi$ has no nontrivial invariant or {\em reducing} subspaces. These two statements in turn are equivalent to the usual commutant $\pi({\mathfrak B})'$ to consist only of multiples of the identity operator. For representations of general *-algebras the corresponding statements are mostly nonequivalent (invariant and reducing subspaces are different notions; a projection picked in the weak commutant may not define a sub *-representation, etc.). All these topics have been extensively discussed in the literature, starting from the pioneering paper by Powers \cite{powers} and have been, we may say, definitely systematized  in Schm\"udgen monograph \cite[Ch.8]{schmud} to which we refer for full details. The condition  $\pi(\A)'_\w={\mb C}I$ for studying irreducibility was first used in Power's paper, but in fact in the case he considered (self-adjoint *-representations) the weak commutant coincides with the commutant $\pi(\A)'_{\rm ss}$ considered by Schm\"udgen.
The fact that $\pi(\A)'_\w$ is, for a quasi-local quasi *-algebra, a von Neumann algebra makes the notion of quasi-irreducibility convenient for the purposes of this paper. We leave to future papers the study of its interplay with invariant or quasi-invariant \cite{bit0} subspaces.
\enrem

\medskip
Now we turn to the order structure. Let 
$$\Ao^+:= \left\{ \sum_{k=1}^n x_k^* x_k; \; x_k \in \Ao, \, n \in {\N}\right \}.$$
It is clear that if $y \in \Ao^+$, then for every $\omega \in \R(\A)$, $\omega(y)\geq 0$, as immediate consequence of (L1) and the linearity of $\omega$. For analogous reasons, we also have $\pi(y)\geq 0$ for every $\pi \in \rep$, by which we mean that $\ip{\pi(y)\xi}{\xi}\geq 0$, for every $\xi \in \D_\pi$.
\begin{prop} Consider the following statements.
\begin{itemize}
\item[(i)] $x \in \overline{\Ao^+}^{\tau_{rep}}$;
\item[(ii)] $\pi(x) \geq 0$, for every $\pi \in \rep$.
\item[(iii)] $\omega(x)\geq 0$, for every $\omega \in \R(\A)$.
\end{itemize}
Then, we have $$ (i) \Rightarrow (ii) \Leftrightarrow (iii) $$
\end{prop}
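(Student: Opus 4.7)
The plan is to prove $(i)\Rightarrow(ii)$ by a density/continuity argument built into the very definition of $\tau_{rep}$, and then to prove the equivalence $(ii)\Leftrightarrow(iii)$ by passing between representations and representable functionals via GNS in one direction and via the vector functionals $\omega^\pi_\xi$ singled out in the Remark after the definition of $\tau_{rep}$ in the other. None of the three steps looks genuinely hard; the only point where one has to be a bit careful is matching the $\tau_{rep}$-convergence to the strong-star convergence of the images under $\pi$, but this is built in.

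For $(i)\Rightarrow(ii)$, I would fix $\pi\in\rep$ and $\xi\in\D_\pi$ and pick a net $x_\lambda\in\Ao^+$ with $x_\lambda\stackrel{\tau_{rep}}{\to}x$. By the very definition of $\tau_{rep}$, for this seminorm one has
\[
\|\pi(x_\lambda-x)\xi\|+\|\pi(x_\lambda^*-x^*)\xi\|\longrightarrow 0.
\]
In particular $\ip{\pi(x_\lambda)\xi}{\xi}\to\ip{\pi(x)\xi}{\xi}$. But $x_\lambda\in\Ao^+$ is a finite sum $\sum_k y_k^*y_k$ with $y_k\in\Ao$, so $\pi(x_\lambda)=\sum_k\pi(y_k)^\dagger\mult\pi(y_k)$ and therefore $\ip{\pi(x_\lambda)\xi}{\xi}=\sum_k\|\pi(y_k)\xi\|^2\geq 0$. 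Passing to the limit gives $\ip{\pi(x)\xi}{\xi}\geq 0$, which is exactly $\pi(x)\geq 0$ in the sense defined in the text.

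For $(ii)\Rightarrow(iii)$, given $\omega\in\R(\A)$ I would just invoke the GNS construction: there exists a *-representation $\pi_\omega\in\rep$ and a cyclic vector $\xi_\omega\in\D_{\pi_\omega}$ with $\omega(x)=\ip{\pi_\omega(x)\xi_\omega}{\xi_\omega}$, and the hypothesis applied to $\pi_\omega$ and $\xi_\omega$ yields $\omega(x)\geq 0$. For the converse $(iii)\Rightarrow(ii)$, given any $\pi\in\rep$ and $\xi\in\D_\pi$, the Remark after the definition of $\tau_{rep}$ records that the vector functional $\omega^\pi_\xi(a):=\ip{\pi(a)\xi}{\xi}$ satisfies (L1)--(L3), hence belongs to $\R(\A)$; assumption (iii) then gives $\omega^\pi_\xi(x)=\ip{\pi(x)\xi}{\xi}\geq 0$, and since $\pi$ and $\xi$ were arbitrary this is exactly $\pi(x)\geq 0$ for every $\pi\in\rep$.

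The only conceptual obstacle, and the reason the statement stops at a one-sided arrow from (i), is the missing implication $(ii)\Rightarrow(i)$: there is no obvious Hahn--Banach/separation step because $\tau_{rep}$-closedness of $\Ao^+$ in $\A$ is not at hand, and the usual C*-trick (Gelfand--Naimark combined with functional calculus to write a positive element as a square) is not available in a quasi *-algebra for elements $x\in\A\setminus\Ao$. So I would leave $(ii)\Rightarrow(i)$ untouched and state the result exactly in the asymmetric form given.
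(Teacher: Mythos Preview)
Your proof is correct and follows essentially the same approach as the paper: the paper uses the same density/continuity argument for $(i)\Rightarrow(ii)$, the same GNS argument for $(ii)\Rightarrow(iii)$, and simply declares $(iii)\Rightarrow(ii)$ ``trivial'' where you spell out the vector-functional argument. Your additional remarks on why $(ii)\Rightarrow(i)$ is left open are consonant with the paper, which likewise leaves this implication as an open question immediately after the proposition.
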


\begin{proof}

  (i)$\Rightarrow$ (ii):
Let $\pi\in\rep$ and   $x \in
\overline{\Ao^+}^{\tau_{rep}}$. Then, there exist a net $\{x_\lambda\}\subset\Ao^+$ such that
$x_\lambda \stackrel{\tau_{rep}}{\to} x$ then
$\ip{\pi(x)\xi}{\xi}=\lim \ip{\pi(x_\lambda) \xi }{\xi}\geq0$.

(ii)$\Rightarrow$ (iii) Let $\omega \in \R(\A)$ and $\{\pi_\omega,\HH_\omega,\xi_\omega\} $ the corresponding GNS construction. Then $\omega(x)= \ip{\pi_\omega(x)\xi_\omega}{\xi_\omega}\geq 0$, by the assumption.

(iii) $\Rightarrow$ (ii) is trivial.

\end{proof}

We leave open the question as to whether (ii)$ \Rightarrow$ (i). The previous Proposition, however,  suggests the definition $\A^+:=\overline{\Ao^+}^{\tau_{rep}}$.

This allows us to introduce an order in $\rep$: we say that $\nu<\omega$ if $(\omega-\nu)(a)\geq 0$  for every
$\,\omega, \nu\in \R(\A), a\in\A^+.$

\bedefi A state $\omega\in \E(\A)$ is called {\em pure}  if, for every  $\nu\in
\B(\A)$, $ \mbox{ $0\leq\nu\leq\omega$},$ there exists $ \lambda\in [0,1]$, such that
$\nu=\lambda\omega.$
We denote by
$\P({\A})$ the set of pure states.
\findefi

Next we prove that, similarly to C*-algebras, the
notions of purity of a state $\omega$ and (quasi)-irreducibility of the
representation associated with $\omega$ are intimately related.

\begin{thm}
Let $(\A,\Ao)$ be a {\em quasi-local quasi *-algebra} and
$\omega$ a state over $\A$ such that $\pi_{\omega}\in \rep$. The following statements are
equivalent.
\begin{itemize}
\item[$(i)$] $\omega$ is an extremal point of $ \B(\A)$.
\item[$(ii)$] $\omega$ is a pure state $(\omega\in \E(\A))$.
\item[$(iii)$] $\pi_\omega(\A)'_\w={\mb C}I$; i.e. $\pi_\omega$ is quasi-irreducible.
\end{itemize}

\end{thm}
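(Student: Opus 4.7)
The plan is to prove the four implications $(iii)\Rightarrow(ii)$, $(ii)\Rightarrow(iii)$, $(ii)\Rightarrow(i)$ and $(i)\Rightarrow(ii)$, from which the three-way equivalence follows. The core content is $(ii)\Leftrightarrow(iii)$, a Radon--Nikodym style argument; $(i)\Leftrightarrow(ii)$ is the standard convex-geometric manoeuvre familiar from C*-algebras, up to the non-trivial point of checking that $\omega-\nu$ is itself representable whenever $\nu\leq\omega$.

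For $(iii)\Rightarrow(ii)$, given $\nu\in\B(\A)$ with $0\leq\nu\leq\omega$, I would introduce on $\D_{\pi_\omega}\times\D_{\pi_\omega}$ the positive sesquilinear form
\[
B\bigl(\pi_\omega(a)\xi_\omega,\pi_\omega(b)\xi_\omega\bigr):=\nu(b^*a),\qquad a,b\in\Ao.
\]
The Cauchy--Schwarz inequality for $\nu$ combined with $\nu\leq\omega$ shows $B$ is well defined on equivalence classes and bounded by $\|\cdot\|\,\|\cdot\|$, so $B$ extends to $\h_\omega$ and is represented by a bounded operator $T$ with $0\leq T\leq I$. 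Property (L2) for $\nu$ and the identity $\pi_\omega(ac)=\pi_\omega(a)\pi_\omega(c)$ valid for $a,c\in\Ao$ give, by direct computation, $T\in\pi_\omega(\Ao)'_\w$; by Proposition \ref{nove} this equals $\pi_\omega(\A)'_\w=\mb{C}I$, whence $T=\lambda I$ and $\nu|_{\Ao}=\lambda\,\omega|_{\Ao}$. Both functionals are $\tau_{rep}$-continuous and $\Ao$ is $\tau_{rep}$-dense in $\A$, hence $\nu=\lambda\omega$ on $\A$. For the contrapositive of $(ii)\Rightarrow(iii)$, assume $\pi_\omega(\A)'_\w\neq\mb{C}I$; the von Neumann algebra structure from Proposition \ref{nove} then provides a projection $P\neq 0,I$ in it. Set $\nu(x):=\ip{P\pi_\omega(x)\xi_\omega}{\xi_\omega}$. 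Properties (L1), (L2) and the inequality $0\leq\nu\leq\omega$ follow from $0\leq P\leq I$ together with the bounded commutation of $P$ with $\pi_\omega(\Ao)$, the last inequality being extended from $\Ao^+$ to $\A^+$ by $\tau_{rep}$-continuity. For (L3), approximate $x$ in $\tau_{rep}$ by $x_\mu\in\Ao$: bounded commutation yields
\[
|\nu(x_\mu^*a)|\leq\nu(a^*a)^{1/2}\,\|P^{1/2}\pi_\omega(x_\mu)\xi_\omega\|,
\]
and $\tau_{rep}$-continuity passes to the limit, giving $\gamma_x=\|P^{1/2}\pi_\omega(x)\xi_\omega\|$. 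Finally, if $\nu=\lambda\omega$, ultracyclicity of $\xi_\omega$ and the fact that $P-\lambda I$ commutes with each $\pi_\omega(a)$, $a\in\Ao$, force $P=\lambda I$ on $\h_\omega$, contradicting the non-triviality of $P$.

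The implication $(ii)\Rightarrow(i)$ is immediate: evaluating any decomposition $\omega=t\omega_1+(1-t)\omega_2$ with $\omega_i\in\B(\A)$ at $e$ forces $\omega_i(e)=1$, and purity applied to $t\omega_1\leq\omega$ then yields $\omega_1=\omega$. The converse requires, for $0\leq\nu\leq\omega$ with $c=\nu(e)\in(0,1)$, the convex decomposition $\omega=c(\nu/c)+(1-c)(\omega-\nu)/(1-c)$ inside $\B(\A)$, after which extremality produces $\nu=c\omega$; the delicate point is the representability of $\omega-\nu$. Reusing the operator $T$ built above, one obtains $(\omega-\nu)(x^*a)=\ip{(I-T)\pi_\omega(a)\xi_\omega}{\pi_\omega(x)\xi_\omega}$ for $x\in\A$, $a\in\Ao$ by the same $\tau_{rep}$-continuity/density argument, and Cauchy--Schwarz supplies (L3) with $\gamma_x=\|(I-T)^{1/2}\pi_\omega(x)\xi_\omega\|$. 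The degenerate cases $c\in\{0,1\}$ are handled by the observation that a representable functional $\mu$ with $\mu(e)=0$ vanishes identically, a direct consequence of (L3) with $a=e$. The main obstacle throughout is that elements of $\A\setminus\Ao$ act by possibly unbounded operators on $\h_\omega$, so every manipulation carries potential domain issues; these are systematically bypassed by Proposition \ref{nove}, which reduces weak-commutant checks to bounded commutations with $\pi_\omega(\Ao)$, and by the pairing of $\tau_{rep}$-density of $\Ao$ with the $\tau_{rep}$-continuity of representable functionals, which transfers identities from $\Ao$ to the whole of $\A$.
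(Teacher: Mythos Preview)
Your proof is correct and shares the paper's overall architecture: $(ii)\Rightarrow(iii)$ by contrapositive using a nontrivial projection $P\in\pi_\omega(\A)'_\w$ (identical to the paper), and $(iii)\Rightarrow(ii)$ by passing through $\Ao$ and then extending by $\tau_{rep}$-density and continuity. The packaging of $(iii)\Rightarrow(ii)$ differs slightly: the paper invokes C*-purity of $\omega_0:=\omega\upharpoonright_{\Ao}$ (via unitary equivalence of $\rho_{\omega_0}$ with $\pi_\omega\upharpoonright_{\Ao}$) to conclude $\nu_0=\lambda\omega_0$, whereas you build the Radon--Nikodym operator $T$ directly and read off $T=\lambda I$ from $\pi_\omega(\Ao)'_\w=\mb{C}I$. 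These are two presentations of the same mechanism; yours is a bit more self-contained.

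Where your write-up goes beyond the paper is in $(i)\Leftrightarrow(ii)$. The paper dismisses this as ``proved as for C*-algebras'', but the standard C*-argument requires that $\omega-\nu$ again lie in the cone, and here the cone is $\R(\A)$: one must check that $\omega-\nu$ satisfies (L3). You do this explicitly, reusing $T$ to write $(\omega-\nu)(x^*a)=\ip{(I-T)\pi_\omega(a)\xi_\omega}{\pi_\omega(x)\xi_\omega}$ and obtaining $\gamma_x=\|(I-T)^{1/2}\pi_\omega(x)\xi_\omega\|$. This fills a gap the paper leaves implicit, so your version is in that respect more complete.
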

\begin{proof}
The equivalence of ${\rm (ii)}$ and  ${\rm (i)}$ can be proved as for C*-algebras \cite[Section 2.3.3]{brarob} with the only care of replacing $\|\omega\|$ with $\omega(e)$. We now prove the equivalence of (ii) and (iii).
\\${\rm (ii)} \Rightarrow  {\rm (iii)}$. If $\pi_\omega$ is not irreducible, then $\pi_\omega(\A)'_\w= \pi(\Ao)'_\w$, which is a nontrivial von Neumann algebra, contains a nontrivial projection $P$. Define
$$\nu(a)=\ip{\pi_\omega (a) \xi_\omega}{P\xi_\omega}, \quad a\in \A.$$ Then, as it is easily seen, $\nu$ is a representable functional on $(\A, \Ao)$. Moreover, if $a \in \A^+$, we have
$$\omega(a)-\nu(a)= \ip{\pi_\omega (a) \xi_\omega}{(I-P)\xi_\omega}=\ip{\pi_\omega (a)(I-P) \xi_\omega}{(I-P)\xi_\omega} \geq 0.$$
Hence, $\nu\leq \omega$ and $\nu$ is not a multiple of $\omega$.

${\rm (iii)} \Rightarrow  {\rm (ii)}$. If $\pi_\omega$ is quasi-irreducible, then $\pi_\omega(\A)'_\w=\pi_\omega (\Ao)'_\w ={\mb C}I$. Let $\nu\in
\B(\A)$, $ \mbox{ $0\leq\nu\leq\omega$}$. Since $\omega_0:=\omega \upharpoonright_{\Ao}$ and the (usual) GNS representation $\rho_{\omega_O}$ is unitarily equivalent to  $\pi_\omega \upharpoonright_{\Ao}$ (or, more precisely, to the representation obtained by extending every operator $\pi_\omega(a)$, $a \in \Ao$ to the Hilbert space $\Hil_{\pi_\omega}$ completion of $\D_{\pi_\omega}$). This implies that $\rho_{\omega_O}(\Ao)'=\pi_\omega (\Ao)'_\w ={\mb C}I$. Hence, $\omega_0$ is pure on $\Ao$; so $\nu_0:= \nu \upharpoonright_{\Ao}$ is a multiple of $\omega_0$; i.e. $\nu_0=\lambda \omega_0$, for some $\lambda\in [0,1]$. The continuity of the states and the density of $\Ao$ in $\A$ w. r. to the topology ${\tau_{rep}}$ easily imply that $\nu=\lambda \omega$; i.e. $\omega$ is pure.

\end{proof}

\section{Asymptotic behavior}\label{sect_two}

In \cite{bit0} Inoue and two of us studied local modifications of states on a quasi-local quasi *-algebra. We give here some more properties coming from this notion.

We define the \emph{local modification}, $\omega_b,$ of an arbitrary state
$\omega\in \E(\A)$, due to the action of an element, $b$, of $\A_0$ by the
formula
$$\omega_b(a)=\tfrac{\omega(b^*ab)}{\omega(b^*b)}.$$

In what follows we will always assume that $\omega(b^*b)\neq0$.
It is possible to check that conditions (L1)-(L3) are stable under
the map $\omega\rightarrow \omega_b$, with $b\in\Ao$ \cite{bit0}.
Therefore for every $b$ of $\A_0$ and $\omega \in \E(\A)$ the
modification $\omega_b$ also belongs to $\E(\A)$ and it is $\tau_{rep}$ continuous.

\medskip
The following definition selects states on $\A$ with a {\em
reasonable asymptotic behavior}. These states, indeed, factorize
on regions far enough from the support of a given element.

\bedefi\label{defalclu}
A state $\omega$ over $\A$ is said to be almost clustering (AC)
if, $\forall \,b\in\Ao$ and $\forall\,\epsilon>0$, there exists
$\alpha\in\F$, $\alpha\geq \alpha_b$, such that,
$\forall\gamma\perp\alpha$ we have
$\left|\omega(ab)-\omega(a)\omega(b)\right|\leq \epsilon \|a\|
\|b\| $, $\forall\,a\in\A_\gamma$.
\findefi
Similar definitions are given in many textbooks, like \cite{sew},
\cite{brarob} and \cite{emch}, where the physical motivations are
discussed in  detail.

\begin{prop} \label{16} If a state $\omega$ over $\A$ is AC
 then, for every $b \in \Ao$, the modification $\omega_b$ is almost clustering.
\end{prop}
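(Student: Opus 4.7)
The plan is to reduce almost clustering of $\omega_b$ to two applications of almost clustering of $\omega$, together with the locality property of the net. Fix $c \in \Ao$ and $\epsilon' > 0$, and set $\kappa := \omega(b^*b)\neq 0$. A direct algebraic rearrangement gives
\[
\omega_b(ac)-\omega_b(a)\omega_b(c) \;=\; \frac{1}{\kappa^{2}}\bigl[\,\kappa\,\omega(b^*acb) \;-\; \omega(b^*ab)\,\omega(b^*cb)\,\bigr].
\]
The key observation is that whenever $a\in\A_\gamma$ with $\gamma\perp\alpha_b$, property (c) of the local structure forces $ab=ba$ and $ab^*=b^*a$, so that $b^*acb = a\,(b^*cb)$ and $b^*ab = a\,(b^*b)$. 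In other words, the two inner sandwiches $b^*cb$ and $b^*b$ can be treated as fixed elements of $\Ao$ and pulled outside the ``$a$''-slot.

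Applying the AC property of $\omega$ to the element $b^*b$, with tolerance $\epsilon := \kappa \epsilon'/(2\|b\|^{2})$, produces some $\alpha_1\in\F$ with $\alpha_1\geq\alpha_{b^*b}$; applying it to $b^*cb$ produces $\alpha_2\geq\alpha_{b^*cb}$. By the upward-directedness of $\F$ I pick $\alpha'\in\F$ dominating $\alpha_1,\alpha_2,\alpha_b,\alpha_c$; property (ii) of the index set then guarantees that every $\gamma\perp\alpha'$ is also orthogonal to each of these, so that the commutativity above is in force and both cluster estimates can be invoked. Writing the approximations $\omega(a\,b^*cb)=\omega(a)\,\omega(b^*cb)+r_1$ and $\omega(a\,b^*b)=\omega(a)\,\kappa+r_2$ with $|r_1|\leq\epsilon\|a\|\|b\|^{2}\|c\|$ and $|r_2|\leq\epsilon\|a\|\|b\|^{2}$, the principal terms in the bracket cancel and one is left with $\kappa r_1 - r_2\,\omega(b^*cb)$. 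Using the standard inequality $|\omega(b^*cb)|\leq\kappa\|c\|$ for a positive form on the C*-algebra $\A_{\alpha_b}\cup\A_{\alpha_c}$ (inside a common $\A_\delta$ obtained from directedness) then yields
\[
\bigl|\omega_b(ac)-\omega_b(a)\omega_b(c)\bigr| \;\leq\; \frac{2\epsilon\|b\|^{2}}{\kappa}\,\|a\|\|c\| \;=\; \epsilon'\|a\|\|c\|,
\]
which is precisely the AC estimate required for $\omega_b$, with threshold $\alpha'\geq\alpha_c$.

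The only real obstacle is the bookkeeping on $\F$: one must ensure that a single $\alpha'$ simultaneously exceeds $\alpha_b$, $\alpha_c$ and the two AC thresholds, and that $\gamma\perp\alpha'$ transfers to orthogonality with each of the smaller indices, so that the commutativity step and both cluster bounds are simultaneously legitimate. This is a straightforward combination of upward-directedness with property (ii). Beyond that, the proof is a telescoping cancellation, with the factor $2\|b\|^{2}/\kappa$ absorbed into the choice of $\epsilon$.
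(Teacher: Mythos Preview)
Your proof is correct and follows essentially the same route as the paper's: the same add--subtract telescoping of $\kappa\,\omega(b^*acb)-\omega(b^*ab)\,\omega(b^*cb)$, the same use of locality to commute $a$ past $b,b^*$, two applications of the AC property of $\omega$ (to $b^*cb$ and to $b^*b$), and the C*-inequality $|\omega(b^*cb)|\leq\|c\|\,\omega(b^*b)$ to close the estimate. Your bookkeeping on the index set (choosing a single $\alpha'$ dominating all the relevant thresholds and invoking property (ii)) is in fact more explicit than the paper's, which simply picks $\alpha_{c,b}\geq\alpha_c,\alpha_b$ without spelling out that the AC thresholds for $c^*bc$ and $c^*c$ must also be dominated; but the underlying argument is identical.
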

\begin{proof}
If a state $\omega$ over $\A$ is a AC, $\forall
\,b\in\Ao$ and $\forall\,\epsilon>0$, there exists $\alpha\in\F$,
$\alpha\geq \alpha_b$ such that, $\forall\gamma\perp\alpha$ we
have $\left|\omega(ab)-\omega(a)\omega(b)\right|\leq \epsilon
\|a\|\|b\|$, $\forall\,a\in\A_\gamma$.

For every $c\in\Ao$ there exist $\alpha_c$ the support of $c$ and a
$\F\ni\alpha_{c,b}\geq\alpha_c,\alpha_b$ ($\F$ is a set of indexes
directed upward). For every $\gamma\bot\alpha_{c,b} $ and
$a\in\A_\gamma$
\begin{align*}
&\left|\omega_c(ab)-\omega_c(a)\omega_c(b)\right|=\left|\frac{\omega(c^*abc)\omega(c^*c)-\omega(c^*ac)\omega(c^*bc)}{\omega(c^*c)^2}\right|\\
&=
\left|\frac{\omega(ac^*bc)\omega(c^*c)-\omega(a)\omega(c^*bc)\omega(c^*c)+
\omega(a)\omega(c^*bc)\omega(c^*c)-\omega(c^*ac)\omega(c^*bc)}{\omega(c^*c)^2}\right| \\
&\leq
\frac{\mid\omega(ac^*bc)-\omega(a)\omega(c^*bc)\mid\omega(c^*c)+
\mid\omega(c^*ac)-\omega(a)\omega(c^*c)\mid \,
\mid\omega(c^*bc)\mid }{\omega(c^*c)^2} \\&\leq
\frac{\epsilon\|c\|^2
\parallel a \parallel \|b\| \omega(c^*c)+\mid \omega(ac^*c)-\omega(a)\omega(c^*c)\mid  \omega(c^*bc)\mid }{\omega(c^*c)^2}
\\&\leq
\frac{\epsilon\|c\|^2
\parallel a \parallel \|b\| \omega(c^*c)+\epsilon\|c\|^2
\parallel a \parallel \mid\omega(c^*bc)\mid }{\omega(c^*c)^2}\\
&=\epsilon\|c\|^2
\parallel a \parallel \frac{ \|b\|\omega(c^*c)+\mid\omega(c^*bc)\mid
}{\omega(c^*c)^2}\leq \epsilon \|c\|^2
\parallel a \parallel \|b\|\frac{ \omega(c^*c)+\omega(c^*c)}{\omega(c^*c)^2}\\&=2\epsilon \|c\|^2
\parallel a \parallel \parallel b\parallel \frac{ 1}{\omega(c^*c)}.
   \end{align*}

\end{proof}

Let us now consider a group $G$ of invertible maps from $\F$ onto $\F$ (think of translations, in the standard case).
To every element $g \in G$ it corresponds an isometric *-automorphism $\tau_g$ of $\Ao$
satisfying the
following covariance relation: $\tau_g(\A_\alpha)=\A_{g(\alpha)}$. The map $\tau: g \in G \to \tau_g \in Aut(\Ao)$ is a representation of $G$.

A state $\omega$ is invariant under $G$ (shortly, $G$-invariant) if
$\omega(\tau_g(x))=\omega(x)$ for all $g\in G$ and $x\in\A_0$.

Let us now fix a sequence ${\mathbf g}:=\{g_j, j\in{\Bbb{N}}\}$  of elements of $G$ with the property that, for all $\alpha\in\F$,  there exists $j_\alpha\in {\mathbb N}$ such that, for all $j\geq j_\alpha$,   $g_j(\alpha)\perp\alpha$.
\berem In a concrete realization, this means
that the {local} element $x$ is moved  {\em towards infinity} by the action of the group.
\enrem

Given $x\in\A_0$ we can define a new element of $\A_0$,  $x_N$,
$N\geq 1$, via the following mean: \be
x_N=\frac{1}{N}\,\sum_{j=1}^N\,\tau_{g_j}(x), \quad N \in {\Bbb{N}}. \label{31}\en

Consider first the case where $\tau_{rep}-\lim_{N\to \infty}x_N$ exists in $\A$ and let us call $x_\infty$ this limit. Then, for every state $\omega\in \E(\A)$, we have
$$\omega (x_\infty)=\lim_{N\to \infty}\omega(x_N),$$
since the GNS representation corresponding to $\omega$ is $\tau_{rep}$-continuous.

More in general, we put
\begin{align*}  &D(x_\infty)=\{ \omega \in \E(\A); \lim_{N\to \infty}\omega(x_N) \mbox{ exists }\};\\ &\omega(x_\infty):=\lim_{N,\infty}\,\omega(x_N), \,\omega \in D(x_\infty).
\end{align*}

\berem We warn the reader about the fact that this doesn't define, in general, an element $x_\infty$ of $\A$. But sometimes it does. For instance,
if $\omega$ is invariant under
G, then $\omega\in D(x_\infty)$, for all $x\in\A_0$, and,
furthermore, $\omega(x_\infty)=\omega (x)$. Hence, if the set
of all the states which are invariant under G is
sufficient, this would imply that an element $x_\infty \in \Ao$ exists for all $x\in\A_0$, and that $x=x_\infty$.

\enrem

We have already seen that the modification $\omega_b$ of a given
state $\omega$, $b\in\A_0$ and
$\omega_b(.)=\frac{\omega(b^*.b)}{\omega(b^*b)}$, shares with
$\omega$ itself some important properties: first if $\omega$
satisfies conditions (L1)-(L3), then $\omega_b$ satisfies the same
conditions. Moreover, if $\omega$ is AC, then $\omega_b$ is AC as
well. To this list we can add the following result, which is close
to what discussed in \cite{sew} in a standard context:

\begin{thm}
Let $x\in\A_0$ and $\omega\in D(x_\infty)$ an AC state. Hence,
for all $b\in\A_0$ we have $\omega_b\in D(x_\infty)$ and
$\omega_b(x_\infty)=\omega(x_\infty)$.
\end{thm}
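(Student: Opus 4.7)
The strategy is to reduce the claim to a single uniform estimate on individual translates. By linearity,
$\omega_b(x_N)=\frac{1}{N\,\omega(b^*b)}\sum_{j=1}^N \omega(b^*\tau_{g_j}(x)b)$ and $\omega(x_N)=\frac{1}{N}\sum_{j=1}^N \omega(\tau_{g_j}(x))$, so it suffices to show $|\omega_b(x_N)-\omega(x_N)|\to 0$ as $N\to\infty$. Combined with the hypothesis $\omega\in D(x_\infty)$, this simultaneously yields $\omega_b\in D(x_\infty)$ and $\omega_b(x_\infty)=\omega(x_\infty)$.

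Fix $\epsilon>0$ and apply the AC property of $\omega$ to the element $c:=b^*b\in\Ao$: there exists $\alpha_1\geq\alpha_{b^*b}$ such that $|\omega(ac)-\omega(a)\omega(c)|\leq \epsilon\|a\|\|c\|$ for every $\gamma\perp\alpha_1$ and every $a\in\A_\gamma$. Using upward directedness of $\F$, choose $\alpha^*\in\F$ with $\alpha^*\geq\alpha_1,\alpha_b,\alpha_x$; the hypothesis on the sequence $\mathbf{g}$ then provides $j^*\in\N$ such that $g_j(\alpha^*)\perp\alpha^*$ for every $j\geq j^*$. Since $\alpha_x\leq\alpha^*$, the covariance relation $\tau_{g_j}(\A_{\alpha_x})=\A_{g_j(\alpha_x)}$ is order-preserving, giving $g_j(\alpha_x)\leq g_j(\alpha^*)$; axiom (ii) of $\perp$ then yields $g_j(\alpha_x)\perp\alpha^*$, and a second application (since $\alpha_1\leq\alpha^*$ and $\alpha_b\leq\alpha^*$) gives $g_j(\alpha_x)\perp\alpha_1$ and $g_j(\alpha_x)\perp\alpha_b$. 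By axiom (c) of the local structure, $\tau_{g_j}(x)\in\A_{g_j(\alpha_x)}$ commutes with every element of $\A_{\alpha^*}$, in particular with $b$, $b^*$ and $b^*b$.

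Consequently, for $j\geq j^*$, $b^*\tau_{g_j}(x)b=\tau_{g_j}(x)b^*b$, and applying AC with $a=\tau_{g_j}(x)$ together with the fact that $\tau_{g_j}$ is isometric on $\Ao$ gives $|\omega(b^*\tau_{g_j}(x)b)-\omega(\tau_{g_j}(x))\omega(b^*b)|\leq \epsilon\|x\|\|b^*b\|$. Dividing by $\omega(b^*b)>0$ produces the uniform estimate $|\omega_b(\tau_{g_j}(x))-\omega(\tau_{g_j}(x))|\leq \epsilon K$, where $K:=\|x\|\|b^*b\|/\omega(b^*b)$, valid for every $j\geq j^*$. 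Averaging, and splitting the sum at $j^*$, the initial $j^*-1$ terms contribute $O(1/N)$ (both $\omega$ and $\omega_b$ are states, and $\|\tau_{g_j}(x)\|=\|x\|$ controls each summand), while the remaining indices contribute at most $\epsilon K$. Hence $\limsup_N |\omega_b(x_N)-\omega(x_N)|\leq\epsilon K$, and the arbitrariness of $\epsilon$ closes the argument.

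The only delicate point is the bookkeeping of the orthogonality relation on $\F$: one must combine upward directedness, the order-preservation of the automorphisms $g_j$, and axiom (ii) of $\perp$ to transfer the single relation $g_j(\alpha^*)\perp\alpha^*$ provided by the sequence hypothesis into the two simultaneous conditions needed, namely $g_j(\alpha_x)\perp\alpha_1$ (so that AC applies to $a=\tau_{g_j}(x)$) and $g_j(\alpha_x)\perp\alpha_b$ (so that $\tau_{g_j}(x)$ commutes with $b,b^*$ via axiom (c), legitimizing the substitution $b^*\tau_{g_j}(x)b=\tau_{g_j}(x)b^*b$). Once this is secured, what remains is a routine Cesàro-type averaging in which the finitely many "bad" initial terms are absorbed by the factor $1/N$.
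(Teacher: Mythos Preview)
Your argument is correct and follows essentially the same route as the paper's proof: eventual commutation of $\tau_{g_j}(x)$ with $b$, the AC estimate applied with the fixed element $b^*b$, and a Ces\`aro-type averaging that kills the finitely many initial terms. One small simplification: you need not claim that $g_j$ is order-preserving on $\F$ (which the paper does not state), since $x\in\A_{\alpha_x}\subseteq\A_{\alpha^*}$ already gives $\tau_{g_j}(x)\in\A_{g_j(\alpha^*)}$, and $g_j(\alpha^*)\perp\alpha^*$ together with axiom~(ii) of $\perp$ yields $g_j(\alpha^*)\perp\alpha_1$ and $g_j(\alpha^*)\perp\alpha_b$ directly.
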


\begin{proof}
We begin the proof with the following remark: due to our local
structure, it is clear that for every $b\in\Ao$ there exists an index $M$, $1\leq M\leq N$, such that,
for $1\leq j\leq M$, $\tau_{g_j}(x)b$ is in general different from
$b\tau_{g_j}(x)$, while they coincide for $j>M$. In other words, $\tau_{g_j}(x)$ and $b$ commute when they are {\em distant enough}. Moreover, since
$\omega$ is AC, we can also assume that for all $\epsilon>0$, and for  $j$ large enough, the following
inequality holds
$$\left|\omega(b^*b\tau_{g_j}(x))-\omega(b^*b)\omega(\tau_{g_j}(x))\right|\leq\epsilon\|b\|^2\|x\|.$$
Hence we can
check that
$$
\left|\omega_b(x_N)-\omega(x_\infty)\right|\rightarrow 0
$$
when $N\rightarrow\infty$. This in particular implies that
$\omega_b$ belongs to $D(x_\infty)$ and, more than this, that
$\omega_b(x_\infty)=\omega(x_\infty)$.
\end{proof}

\begin{cor}
Let $x\in\A_0$ and $\omega\in D(x_\infty)$ an AC state. Let
$\{b_j\in\A_0,\,j=1,2,\ldots,K\}$ be a set of elements of $\A_0$
and $\{\lambda_j\geq0,\,j=1,2,\ldots,K\}$ with
$\sum_{j=1}^K\,\lambda_j=1$. Then the state
$\phi:=\sum_{j=1}^K\,\lambda_j\omega_{b_j}$ belongs to
$D(x_\infty)$ and $\phi(x_\infty)=\omega(x_\infty)$.
\end{cor}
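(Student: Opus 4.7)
The plan is to reduce everything to the previous Theorem, which already handles the single modification case. The convex combination structure is tailor-made for a straightforward linearity-of-limit argument, so very little extra work is needed beyond bookkeeping.

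First I would verify that $\phi$ is a well-defined element of $\E(\A)$: since each $\omega_{b_j}$ belongs to $\E(\A)$ (as recalled just after the definition of local modification), and since $\E(\A)$ is convex with $\sum_j \lambda_j = 1$, $\lambda_j \geq 0$, the functional $\phi$ is indeed a state. This only needs to be mentioned in a sentence.

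Next, I would apply the previous Theorem to $\omega$ and to each individual $b_j \in \Ao$: since $\omega$ is AC and lies in $D(x_\infty)$, the hypotheses are satisfied for every $j$, giving $\omega_{b_j} \in D(x_\infty)$ and $\omega_{b_j}(x_\infty) = \omega(x_\infty)$ for each $j = 1, \dots, K$. The key observation is then that for every $N$,
\begin{equation*}
\phi(x_N) = \sum_{j=1}^K \lambda_j \, \omega_{b_j}(x_N),
\end{equation*}
and since this is a finite linear combination of sequences each of which converges as $N \to \infty$, one may pass to the limit term by term, yielding
\begin{equation*}
\lim_{N \to \infty} \phi(x_N) = \sum_{j=1}^K \lambda_j \, \omega_{b_j}(x_\infty) = \sum_{j=1}^K \lambda_j \, \omega(x_\infty) = \omega(x_\infty),
\end{equation*}
where the last equality uses $\sum_j \lambda_j = 1$. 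This simultaneously shows $\phi \in D(x_\infty)$ and identifies $\phi(x_\infty)$.

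There is really no main obstacle here; the only subtlety worth flagging is that the argument uses finiteness of the sum in an essential way (to commute the limit with the summation without any uniformity assumption). For an infinite convex combination one would need additional dominated-convergence-type hypotheses, but in the stated corollary $K$ is fixed, so the elementary linearity of limits suffices.
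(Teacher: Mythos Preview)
Your proof is correct and matches the paper's intended argument: the paper itself omits the proof as straightforward, and your reduction to the preceding Theorem via linearity of finite sums is exactly the routine verification being taken for granted.
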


The proof is straightforward and will be omitted.

\berem
  Let $x$, $\omega$ and $\phi$ be as in  the previous Corollary. If the set of the $\phi$'s is sufficient, we can again conclude that an element $x_\infty$ does exist in $\Ao$, and it coincides with $x$.
\enrem

\bedefi
A state $\omega$ over $\A$ is termed primary if
$\pi_{\omega}(\A)'_\w\cap(\pi_{\omega}(\A)'_\w)'$ is trivial; i.e.
it consists only of the multiples of the unit operator.
\findefi

\bedefi A state $\omega$ over $\A$ has the cluster property (relatively to $\mathbf g$) if
$$\mid\omega(a\tau_{g_j}(x))-\omega(a)\omega(\tau_{g_j}(x))\mid \rightarrow 0, \quad a\in\A,\, x\in\Ao,$$
when $j\to \infty$.
\findefi

\begin{prop}
If $\omega$ has the cluster property then, for all $b\in\Ao$, $\omega_b$ has the cluster property.
\end{prop}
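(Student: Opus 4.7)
The plan is to exploit that, for $j$ large enough, $\tau_{g_j}(x)$ commutes in $\Ao$ with both $b$ and $b^*$, so that the framing $b^*\cdot b$ inside $\omega_b$ collapses onto a plain $b^*b$ and the cluster hypothesis of $\omega$ can be invoked directly. Concretely, I would reduce the expression $\omega_b(a\tau_{g_j}(x))-\omega_b(a)\omega_b(\tau_{g_j}(x))$ to a telescoping sum of two pieces, each of which vanishes in the limit by applying the cluster property of $\omega$ once with the element $b^*ab\in\A$ and once with $b^*b\in\Ao\subseteq\A$.

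For the commutation step, fix $a\in\A$, $x\in\Ao$ and $b\in\Ao$ with $\omega(b^*b)\neq 0$. Use the directedness of $\F$ to pick $\alpha\in\F$ with $\alpha\geq\alpha_x$ and $\alpha\geq\alpha_b$. By the hypothesis on the sequence $\mathbf{g}$, there exists $j_0$ such that $g_j(\alpha)\perp\alpha$ for all $j\geq j_0$. The covariance $\tau_g(\A_\alpha)=\A_{g(\alpha)}$ forces each $g_j$ to preserve the order on $\F$, whence $g_j(\alpha_x)\leq g_j(\alpha)$; axiom (ii) of the orthonormality relation then yields $g_j(\alpha_x)\perp\alpha$. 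Since $b,b^*\in\A_{\alpha_b}\subseteq\A_\alpha$ while $\tau_{g_j}(x)\in\A_{g_j(\alpha_x)}$, axiom (c) of the local structure gives $b\,\tau_{g_j}(x)=\tau_{g_j}(x)\,b$ and $b^*\,\tau_{g_j}(x)=\tau_{g_j}(x)\,b^*$ for every $j\geq j_0$.

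For such $j$ we therefore have $\omega_b(a\tau_{g_j}(x))=\omega(b^*ab\,\tau_{g_j}(x))/\omega(b^*b)$ and $\omega_b(\tau_{g_j}(x))=\omega(b^*b\,\tau_{g_j}(x))/\omega(b^*b)$, and the straightforward telescoping
\[
\omega_b(a\tau_{g_j}(x))-\omega_b(a)\,\omega_b(\tau_{g_j}(x)) \;=\; T_1(j)+\omega_b(a)\,T_2(j),
\]
with
\[
T_1(j)=\frac{\omega(b^*ab\,\tau_{g_j}(x))-\omega(b^*ab)\,\omega(\tau_{g_j}(x))}{\omega(b^*b)}, \qquad T_2(j)=\omega(\tau_{g_j}(x))-\frac{\omega(b^*b\,\tau_{g_j}(x))}{\omega(b^*b)},
\]
follows. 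The cluster hypothesis for $\omega$ applied to $(b^*ab,x)$ forces $T_1(j)\to 0$, while the same hypothesis applied to $(b^*b,x)$ forces $\omega(b^*b\,\tau_{g_j}(x))-\omega(b^*b)\,\omega(\tau_{g_j}(x))\to 0$ and hence $T_2(j)\to 0$. Since $\omega_b(a)$ is a fixed finite scalar, the sum vanishes in the limit.

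The main (and essentially only) nontrivial step is the opening commutation argument: deducing from the single axiom ``$g_j(\alpha)\perp\alpha$ for $j\geq j_\alpha$'' that, eventually, $\tau_{g_j}(x)$ commutes with an arbitrary $b\in\Ao$ whose support is a priori unrelated to that of $x$. Once that is in place via axiom (ii), axiom (c) and the order-preservation of each $g_j$, what remains is algebraic manipulation plus two invocations of the cluster property of $\omega$.
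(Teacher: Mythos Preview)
The paper states this proposition without proof, so there is nothing to compare against directly. Your argument is correct and mirrors the telescoping strategy the paper uses in the proof of the analogous result for almost clustering states (Proposition~\ref{16}): commute $\tau_{g_j}(x)$ past $b$ for large $j$, then split into two terms each handled by one application of the hypothesis on $\omega$.

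One small gap, easily repaired: your claim that ``covariance forces each $g_j$ to preserve the order on $\F$'' is not justified by the axioms. From $\A_{g_j(\alpha_x)}\subseteq\A_{g_j(\alpha)}$ you cannot conclude $g_j(\alpha_x)\leq g_j(\alpha)$, since Definition~\ref{localstruct}(b) only gives the reverse implication. But this detour is unnecessary: since $x\in\A_{\alpha_x}\subseteq\A_\alpha$, you have directly $\tau_{g_j}(x)\in\tau_{g_j}(\A_\alpha)=\A_{g_j(\alpha)}$, and $g_j(\alpha)\perp\alpha$ together with $b,b^*\in\A_{\alpha_b}\subseteq\A_\alpha$ already gives the desired commutation via axiom (c). With this simplification your proof goes through cleanly.
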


\begin{prop}
Let $(\A,\Ao)$ be a {\em quasi-local quasi *-algebra}, $x\in \Ao$ and
$\omega\in D(x_\infty)$. If
$\omega$ is  primary  then it has the cluster property and the following holds:
\begin{equation} \label{asym}\mid\omega(ax_N)-\omega(a)\omega(x_\infty)\mid \rightarrow 0, \quad a\in\A,\end{equation}
when $N\rightarrow\infty$.

\end{prop}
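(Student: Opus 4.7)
The proposition makes two assertions, and I would handle them in that order: first derive the cluster property from primarity, then deduce the displayed asymptotic formula via Cesàro averaging. The bulk of the work lies in the cluster implication. My overall strategy is to transport the problem into the GNS triple $(\pi_\omega, \Hil_\omega, \xi_\omega)$, where the sequence $\pi_\omega(\tau_{g_j}(x))$ is a uniformly norm-bounded family of bounded operators on $\Hil_\omega$ (by \eqref{3.1}), and to show that every WOT-limit point of this family must be a scalar multiple of the identity.

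The cluster step would proceed as follows. Let $B_j \in B(\Hil_\omega)$ denote the bounded extension of $\pi_\omega(\tau_{g_j}(x))$, so that $B_j \xi_\omega = \lambda_\omega(\tau_{g_j}(x))$ and $\|B_j\|\leq\|x\|$. The locality input is the following commutation: given any $y\in\Ao$, choose $\alpha\in\F$ with $\alpha_x,\alpha_y\le\alpha$; then for $j\geq j_\alpha$ one has $g_j(\alpha)\perp\alpha$, and axiom (c) of the local structure forces $\tau_{g_j}(x)\in\A_{g_j(\alpha)}$ to commute with $y\in\A_\alpha$, whence $B_j\overline{\pi_\omega(y)}=\overline{\pi_\omega(y)}B_j$ in $B(\Hil_\omega)$. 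Assuming by contradiction that cluster fails for some $a\in\A$ along a subsequence $\{j_k\}$ with $|\omega(a\tau_{g_{j_k}}(x))-\omega(a)\omega(\tau_{g_{j_k}}(x))|\geq\epsilon$, I would invoke Banach--Alaoglu to extract a further subsequence $B_{j_{k_l}}\to T$ in the weak operator topology. Passing the commutation to this limit yields $T\in\pi_\omega(\Ao)'$; on the other hand each $B_j$ lies in the von Neumann algebra $\pi_\omega(\Ao)''$ generated by the bounded extensions $\overline{\pi_\omega(\Ao)}$, so the WOT-limit $T$ does too. Hence $T$ belongs to the center of $\pi_\omega(\Ao)''$, which coincides with the center of $\pi_\omega(\A)'_\w$ by Proposition \ref{nove}; primarity then forces $T=\lambda I$ with $\lambda=\lim_l\omega(\tau_{g_{j_{k_l}}}(x))$. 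Using the GNS identity
\[\omega(a\tau_{g_j}(x))=\langle\pi_\omega(a)B_j\xi_\omega,\xi_\omega\rangle=\langle B_j\xi_\omega,\lambda_\omega(a^*)\rangle,\]
which is WOT-continuous in $B_j$, the limit along $j_{k_l}$ becomes $\lambda\,\omega(a)$, and the difference tends to $\lambda\omega(a)-\omega(a)\lambda=0$, contradicting the choice of subsequence.

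Once the cluster property is available, the asymptotic formula is immediate: writing
\[\omega(ax_N)-\omega(a)\omega(x_\infty)=\frac{1}{N}\sum_{j=1}^N\bigl[\omega(a\tau_{g_j}(x))-\omega(a)\omega(\tau_{g_j}(x))\bigr]+\omega(a)\bigl[\omega(x_N)-\omega(x_\infty)\bigr],\]
the first summand is the Cesàro average of a null sequence and so vanishes, while the second vanishes because $\omega\in D(x_\infty)$.

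The main obstacle I anticipate is the commutant-identification step in the middle of the cluster argument: one must reconcile the weak commutant $\pi_\omega(\Ao)'_\w$ (defined for $\Lc^\dagger(\D,\Hil)$-operators) with the ordinary Hilbert-space commutant of the bounded extension algebra $\overline{\pi_\omega(\Ao)}\subset B(\Hil_\omega)$, so that von Neumann's bicommutant theorem applies and Proposition \ref{nove} can legitimately be invoked to detect the center. Everything else — the locality commutation, the subsequence extraction, and the Cesàro averaging — is quite standard once this identification is in place.
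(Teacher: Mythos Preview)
Your proposal is correct and follows precisely the standard argument to which the paper defers (it cites \cite[Theorem 3.2.2]{haag} for the cluster step and omits the ``very standard estimates'' for \eqref{asym}); in effect you have written out what the paper only gestures at, including the use of Proposition~\ref{nove} to identify the relevant center. One small caveat: extracting a WOT-convergent \emph{subsequence} from the bounded family $\{B_j\}$ requires $\Hil_\omega$ to be separable; in the general case you should pass to a subnet (the closed ball is always WOT-compact), which leaves the contradiction argument intact.
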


\begin{proof} The cluster property can be proved as in \cite[Theorem 3.2.2]{haag}, taking into account the equality $\pi_{\omega}(\A)'_\w=\pi_{\omega}(\Ao)'_\w$ stated in Proposition \ref{nove}.
The proof of \eqref{asym} is based in very standard estimates that we omit.
\end{proof}

%\subsection{Some remarks on sesquilinear forms}

\berem As discussed at length in \cite{ait_book} the role that in the theory of representations of *-algebras is played positive linear functional is more conveniently covered, when passing to partial *-algebras, by certain sesquilinear forms enjoying some {\em invariance} property. The interplay of these two notions has been studied in \cite{ct_ban} and \cite{bit}. Some results given in the previous discussion, can be easily extended to sesquilinear form.

We denote with $\mc T(\A) $ the set of all sesquilinear forms
$\Omega$ on $\A \times \A$ with the following properties
\begin{itemize}
\item[(i)]$ \Omega(a,a) \geq 0 \hspace{3mm} \forall a \in \A$;
\item[(ii)] $\Omega(xa,b) = \Omega (a,x^*b),  \quad \forall a,b\in\A,\hspace{2mm} \forall x\in \Ao$.

%\item[(iv)] .... condizione i.p.s.(?) .
\end{itemize}

By (ii) it follows that the linear functional $\vp_a$ defined, for $a \in \A$, by $\vp_a(x)=\Omega(xa,a)$ is positive  and, hence, bounded. Thus, the following inequality holds:
\be
 |\Omega(x\,a,a)|\leq \parallel x \parallel \Omega(a,a),  \quad \forall x\in\A_0,\hspace{2mm} \forall a\in \A.
 \label{51}
 \en

The \emph{modification} $\Omega_b$ of an arbitrary
sesquilinear form $\Omega\in \mc S(\A)$, due to the action of an
element $b$ of $\A_0$, such that $\Omega(b,b)>0$, is then defined by the formula
$$\Omega_b(x,x)=\tfrac{\Omega(xb,xb)}{\Omega(b,b)}.$$

It is easy to check that conditions (i)-(ii) are stable under
the map $\Omega\rightarrow \Omega_b$, with $b\in\Ao.$ Then, since $\Omega$ satisfies (\ref{51}), $\Omega_b$ satisfies an analogous estimate.

Therefore, for every $b$ of $\A_0$ and $\Omega \in \mc T(\A)$, the
modification $\Omega_b$ also belongs to $\mc T(\A).$
\medskip

We now extend Definition \ref{defalclu} to the present settings:
A sesquilinear forms $\Omega$ over $\A$ is said to be
AC if, $\forall \,b\in\Ao$ and
$\forall\,\epsilon>0$, there exists $\alpha\in\F$, $\alpha\geq
\alpha_b$, such that, $\forall\gamma\perp\alpha$ we have
$\left|\Omega(a,b)-\Omega(a,e)\Omega(e,b)\right|\leq \epsilon
\|a\| \|b\| $, $\forall\,a\in\A_\gamma$.
Then, with minor modifications of the proof of Proposition \ref{16} one can prove that the property AC is preserved by local modifications; i.e. if $\Omega$ is AC, then, for every $c \in \Ao$, the modification $\Omega_c$ is
AC.

Other results related to the states could be also restated in terms of sesquilinear form with just minor modifications. \enrem

\section*{Acknowledgements}

This paper was partially supported by MURST. FB acknowledges financial support by the project {\em Problemi Matematici Non Lineari di
Propagazione e Stabilit\`a nei Modelli del Continuo}, coordinated by
Prof. T. Ruggeri.

\end{document}